\theoremstyle{plain}
\newtheorem{definition}{Definition}[section]
\newcommand{\eqref}[1]{(\ref{#1})}
\DeclareMathOperator{\sinc}{sinc}
\DeclareMathOperator{\DFT}{DFT}
\DeclareMathOperator{\vertcat}{vertcat}
\DeclareMathOperator{\flipud}{flipud}
\def\@mkboth#1#2{}
\newlength\appendixwidth
\preto\appendix{\addtocontents{toc}{\protect\patchl@section}}
\newcommand{\patchl@section}{%
	\settowidth{\appendixwidth}{\textbf{Appendix }}%
	\addtolength{\appendixwidth}{1.5em}%
	\patchcmd{\l@section}{1.5em}{\appendixwidth}{}{\ddt}%
}
\mathchardef\ordinarycolon\mathcode`\:
\newcommand{\coloneqq}{:=}
\newlength{\dhatheight}
\newcommand{\doublehat}[1]{%
	\settoheight{\dhatheight}{\ensuremath{\hat{#1}}}%
	\addtolength{\dhatheight}{-0.2ex}%
	\hat{\vphantom{\rule{1pt}{\dhatheight}}%
		\smash{\hat{#1}}}}
\newlength{\dtildeheight}
\newcommand{\ssymbol}[1]{^{\@fnsymbol{#1}}}
\newtheorem{theorem}{Theorem}
\newtheorem{corollary}{Corollary}[theorem]
\begin{document}
	
\title[]{The Broken Ray Transform: Additional Properties and New Inversion Formula}
\author{Michael R. Walker II, Joseph A. O'Sullivan}
\address{Preston M. Green Department of Electrical and Systems Engineering, Washington 
	University in Saint Louis, Saint Louis, Missouri, USA}
\ead{mwalkerii@wustl.edu, jao@wustl.edu}

\begin{abstract}
The significance of the broken ray transform (BRT) is due to its occurrence in a number of modalities spanning optical, x-ray, and nuclear imaging. When data are indexed by the scatter location, the BRT is both linear and shift invariant. Analyzing the BRT as a linear system provides a new perspective on the inverse problem. In this framework we contrast prior inversion formulas and identify numerical issues. This has practical benefits as well. We clarify the extent of data required for global reconstruction by decomposing the BRT as a linear combination of cone beam transforms. Additionally we leverage the two dimensional Fourier transform to derive new inversion formulas that are computationally efficient for arbitrary scatter angles. Results of numerical simulations are presented.
\end{abstract}


\section{Introduction}
The broken ray transform (BRT) appears in the forward model of a number of imaging modalities and measurement geometries. It was first considered in the context of optical scatter imaging \cite{Florescu2009} and later applied to x-ray scatter imaging \cite{Katsevich2013}. The BRT occurs when the measured data are characterized by two rays sharing a common vertex. Time-of-flight positron emission tomography (TOF-PET) could also be considered in this framework (subject to the TOF ambiguity profile). The BRT has been considered for both translation-only measurement geometries~\cite{Florescu2009,Florescu2010,Florescu2011,Katsevich2013,Gouia2014,Zhao2014,Sherson2015,Ambartsoumian2019} and a rotate-shift measurement geometry \cite{Ambartsoumian2012,Sherson2015,Ambartsoumian2016}.

To our knowledge, all applications associated with the single-scatter BRT represent joint reconstruction problems. Two spatially-varying images must be resolved. For example, it may be necessary to recover attenuation despite nonuniform scatter density \cite{Florescu2010,Katsevich2013}, or recover two attenuation images at distinct energy levels \cite{Florescu2018}. Variations in the joint reconstruction problem have motivated several novel contributions related to the BRT. These contributions are not strictly academic. In application, the forward models must be tailored to the joint reconstruction problem. This may limit available or useful data. The joint reconstruction problem determines available BRT inversion strategies.
 
Before contrasting prior work, we first define a notional measurement geometry. We will use this to establish some notation and define a joint reconstruction problem involving the BRT. We generalize the joint reconstruction problem to cover coherent scatter x-ray imaging. This modality has received renewed interest recently, however joint reconstruction of scatter density and attenuation has not yet been addressed \cite{MacCabe2012,Brady2013,Odinaka2017}. In coherent scatter x-ray imaging, scatter density is highly sensitive to scatter angle. For this reason, we focus on BRT inversion using only two scatter angles. 

As a simplification we assume a mono-chromatic x-ray pencil-beam incident upon some media of interest. At point ${x\in\mathbb{R}^2}$ the beam interacts with the media and scatters coherently. We use  $\theta_i\in S^1$ to represent the direction of the source relative to the scatter location. The direction of the scattered photon is  ${\theta_j\in S^1}$. We assume it is detected by a columated detector. Due to the combination of a pencil beam and columated detector we assume the scatter location $x$ is known precisely. This measurement geometry is depicted in Figure \ref{fig:measurement_geometry}.

\begin{figure}%
	\centering
	\subfloat[Measurement gemoetry]{
\begingroup%
  \makeatletter%
  \providecommand\color[2][]{%
    \errmessage{(Inkscape) Color is used for the text in Inkscape, but the package 'color.sty' is not loaded}%
    \renewcommand\color[2][]{}%
  }%
  \providecommand\transparent[1]{%
    \errmessage{(Inkscape) Transparency is used (non-zero) for the text in Inkscape, but the package 'transparent.sty' is not loaded}%
    \renewcommand\transparent[1]{}%
  }%
  \providecommand\rotatebox[2]{#2}%
  \newcommand*\fsize{\dimexpr\f@size pt\relax}%
  \newcommand*\lineheight[1]{\fontsize{\fsize}{#1\fsize}\selectfont}%
  \ifx\svgwidth\undefined%
    \setlength{\unitlength}{189.90101624bp}%
    \ifx\svgscale\undefined%
      \relax%
    \else%
      \setlength{\unitlength}{\unitlength * \real{\svgscale}}%
    \fi%
  \else%
    \setlength{\unitlength}{\svgwidth}%
  \fi%
  \global\let\svgwidth\undefined%
  \global\let\svgscale\undefined%
  \makeatother%
  \begin{picture}(1,0.52132422)%
    \lineheight{1}%
    \setlength\tabcolsep{0pt}%
    \put(0,0){\includegraphics[width=\unitlength]{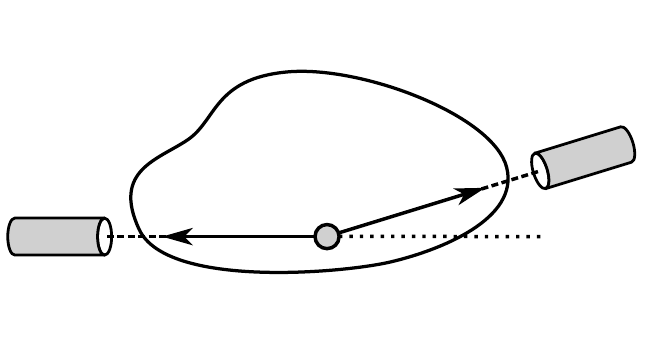}}%
    \put(0.33351997,0.19121598){\color[rgb]{0,0,0}\makebox(0,0)[lt]{\lineheight{1.25}\smash{\begin{tabular}[t]{l}$\theta_i$\end{tabular}}}}%
    \put(0.09315091,0.21438926){\color[rgb]{0,0,0}\makebox(0,0)[t]{\lineheight{1.25}\smash{\begin{tabular}[t]{c}Source\end{tabular}}}}%
    \put(0.88113671,0.18813827){\color[rgb]{0,0,0}\makebox(0,0)[t]{\lineheight{1.25}\smash{\begin{tabular}[t]{c}Detector\end{tabular}}}}%
    \put(0.64361711,0.23527112){\color[rgb]{0,0,0}\makebox(0,0)[rt]{\lineheight{1.25}\smash{\begin{tabular}[t]{r}$\theta_j$\end{tabular}}}}%
    \put(0.48235879,0.20611403){\color[rgb]{0,0,0}\makebox(0,0)[t]{\lineheight{1.25}\smash{\begin{tabular}[t]{c}$x$\end{tabular}}}}%
    \put(0.43990508,0.32632085){\color[rgb]{0,0,0}\makebox(0,0)[t]{\lineheight{1.25}\smash{\begin{tabular}[t]{c}Media\end{tabular}}}}%
  \end{picture}%
\endgroup%
\label{fig:measurement_geometry}}
	\subfloat[System model]{
\begingroup%
  \makeatletter%
  \providecommand\color[2][]{%
    \errmessage{(Inkscape) Color is used for the text in Inkscape, but the package 'color.sty' is not loaded}%
    \renewcommand\color[2][]{}%
  }%
  \providecommand\transparent[1]{%
    \errmessage{(Inkscape) Transparency is used (non-zero) for the text in Inkscape, but the package 'transparent.sty' is not loaded}%
    \renewcommand\transparent[1]{}%
  }%
  \providecommand\rotatebox[2]{#2}%
  \newcommand*\fsize{\dimexpr\f@size pt\relax}%
  \newcommand*\lineheight[1]{\fontsize{\fsize}{#1\fsize}\selectfont}%
  \ifx\svgwidth\undefined%
    \setlength{\unitlength}{178.63340759bp}%
    \ifx\svgscale\undefined%
      \relax%
    \else%
      \setlength{\unitlength}{\unitlength * \real{\svgscale}}%
    \fi%
  \else%
    \setlength{\unitlength}{\svgwidth}%
  \fi%
  \global\let\svgwidth\undefined%
  \global\let\svgscale\undefined%
  \makeatother%
  \begin{picture}(1,0.55420764)%
    \lineheight{1}%
    \setlength\tabcolsep{0pt}%
    \put(0,0){\includegraphics[width=\unitlength]{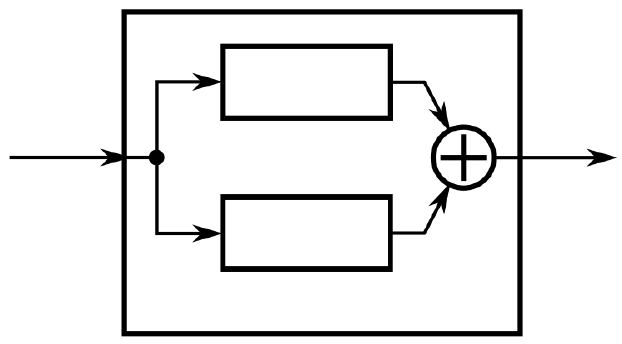}}%
    \put(0.66144952,0.05676832){\color[rgb]{0,0,0}\makebox(0,0)[t]{\lineheight{1.25}\smash{\begin{tabular}[t]{c}BRT, $\theta_i,\theta_j$\end{tabular}}}}%
    \put(0.4936922,0.40634784){\color[rgb]{0,0,0}\makebox(0,0)[t]{\lineheight{1.25}\smash{\begin{tabular}[t]{c}CBT, $\theta_i$\end{tabular}}}}%
    \put(0.4936922,0.1636196){\color[rgb]{0,0,0}\makebox(0,0)[t]{\lineheight{1.25}\smash{\begin{tabular}[t]{c}CBT, $\theta_j$\end{tabular}}}}%
    \put(0.10132265,0.34281902){\color[rgb]{0,0,0}\makebox(0,0)[t]{\lineheight{1.25}\smash{\begin{tabular}[t]{c}Image\end{tabular}}}}%
    \put(0.92894517,0.33210398){\color[rgb]{0,0,0}\makebox(0,0)[t]{\lineheight{1.25}\smash{\begin{tabular}[t]{c}Data\end{tabular}}}}%
    \put(0.101678,0.24020511){\color[rgb]{0,0,0}\makebox(0,0)[t]{\lineheight{1.25}\smash{\begin{tabular}[t]{c}$\mu$\end{tabular}}}}%
    \put(0.92908186,0.24189984){\color[rgb]{0,0,0}\makebox(0,0)[t]{\lineheight{1.25}\smash{\begin{tabular}[t]{c}$g_{i,j}$\end{tabular}}}}%
  \end{picture}%
\endgroup%
\label{fig:lsi}}
	\caption{The generalized measurement geometry is depicted in Figure \ref{fig:measurement_geometry}. The location of the scattering event is indicated by $x$. The directions $\theta_i$, and $\theta_j$ represent the direction of the source and detector relative to $x$, respectively. Indexing the data by the scatter location, $x$, both the BRT and CBT are linear shift-invariant. In Figure \ref{fig:lsi} we depict the forward operator as a linear system. The system operates on an image, $\mu$, and returns data $g_{i,j}$.}
	\label{fig:geometryandsystem}
\end{figure}

The intensity measured at the detector largely depends on two media-specific images: attenuation and scatter density. The incident path is a straight line defined by $\theta_i,x$. The loss in intensity along this path due to attenuation is governed by Beer's law
\begin{equation}
\exp\left(-\int_0^\infty \mu(x - t\theta_i)dt\right).
\end{equation} 
We use $\mu(x): \mathbb{R}^2\rightarrow\mathbb{R}^+$ as the attenuation image representing both scatter and absorption. Intensity loss along the scatter path due to attenuation has a similar form and combines multiplicatively. For non-coherent scatter applications (e.g. fluorescence imaging) it may be necessary to distinguish the energy levels of the attenuation image before and after the scatter event. This has been investigated recently \cite{Florescu2018}.

Even in homogeneous media, the intensity observed at the detector may vary with respect to scatter angle (e.g. $\theta_i\cdot\theta_j$) and energy level. For coherent scatter imaging, the scatter density does not depend on these terms independently, but rather through Bragg's law \cite{Harding1987}. This relationship is summarized by the so-called momentum transfer ${q(s,E): (-1,1)\times \mathbb{R}^+\rightarrow \mathbb{R}^+}$
\begin{equation}
q(s,E) = 2\frac{E}{hc}\sqrt{\frac{1-s}{2}}.
\end{equation}
Here $h$ and $c$ are the Plank's constant and the speed of light, respectively. This definition is unconventional as we have chosen to define it over the cosine of the scatter angle, $s$, rather than the scatter angle directly. Scatter intensity for inhomogeneous media varies both spatially and with respect to momentum transfer. We use $f(x,q) : \mathbb{R}^2 \times \mathbb{R}^+\rightarrow\mathbb{R}^+$ to represent the scatter density image.

Combining the effects of attenuation and scatter density we arrive at the measurement function
\begin{equation}
\fl p(x,\theta_i,\theta_j,E) = f\left(x,q\left(-\theta_i\cdot\theta_j,E\right)\right)\exp\left(-\int_0^\infty \mu(x + t\theta_i) + \mu(x + t\theta_j)dt\right)\label{eq:pfull}. 
\end{equation}
In this expression we have omitted a number of terms necessary for accurate models of measured data. However, we assume the remaining terms are known multiplicative factors. Measured data can then be scaled to achieve this generalized form.

To simplify the notation we will make use of three common transforms. Borrowing the notation of Natterer \cite{Natterer2001}, we define the cone beam transform (CBT) $B$ of $\mu$
\begin{equation}
(B \mu)(x,\theta):=\int_0^\infty \mu(x + t\theta)dt. \label{eq:conebeamtransform}
\end{equation}
This transform appears in \eqref{eq:pfull}. In particular, the generalized model includes the linear combination of two cone beam transforms sharing a common vertex. This is commonly referred to as the broken ray transform 
\begin{equation}
(G \mu)(x,\theta_i,\theta_j):=(B \mu)(x,\theta_i) + (B \mu)(x,\theta_j). \label{eq:brokenraytransform}
\end{equation}
Denoting the left side of \eqref{eq:brokenraytransform} by data $g_{i,j}$, Figure \ref{fig:lsi} represents \eqref{eq:brokenraytransform} as a linear system with component CBT operators \eqref{eq:conebeamtransform}. 

In subsequent sections we will also make use of the  2D Radon transform
\begin{equation}
(R \mu)(v,\theta):=\int_{-\infty}^\infty \mu(v\theta^\perp + t\theta)dt. \label{eq:radon_restatement}
\end{equation}
Here $v\in\mathbb{R}^1$, and $\theta\in S^1$ represent the shift and rotate coordinates of the transform. We assume $\theta^\perp$ is uniquely defined by rotating $\theta$ counter-clockwise by $\pi/2$. 

Using these transforms we can express the log of the measured data
\numparts
\begin{eqnarray}
\fl\ln p(x,\theta_i,\theta_j,E) &= \ln f\left(x,q\left(\theta_i\cdot\theta_j,E\right)\right) - (B \mu)(x,\theta_i)-(B \mu)(x,\theta_j)\\
\fl&=\ln f\left(x,q\left(\theta_i\cdot\theta_j,E\right)\right) - (G \mu)(x,\theta_i,\theta_j).\label{eq:logdata} 
\end{eqnarray}
\endnumparts
The BRT is not directly available in \eqref{eq:logdata}. However, the term $f$ can be canceled with differential measurements \cite{Florescu2010} even for inhomogenous media. 

Given three scatter angles $\theta_i,\theta_j,\theta_k$ such that 
\begin{equation}
\theta_i\cdot\theta_k = \theta_j\cdot\theta_k\label{eq:condition_momentum},
\end{equation}
we have 
\begin{equation}
\fl \ln p(x,\theta_i,\theta_k,E) - \ln p(x,\theta_j,\theta_k,E) = -(B \mu)(x,\theta_i)+(B \mu)(x,\theta_j).\label{eq:signedbrokenraytransform}
\end{equation}
The condition \eqref{eq:condition_momentum} is only required when the scatter density is a function of momentum transfer. For some modalities, scatter density varies with respect to scatter angle according to a known function (e.g. Klein–Nishina). In such cases the data can be corrected and momentum transfer removed from \eqref{eq:logdata}. 

For clarification, we will refer to the right-hand side of \eqref{eq:signedbrokenraytransform} as the signed broken ray transform (SBRT) due to the sign change between CBTs. This is equivalent to the signed V-line transform \cite{Ambartsoumian2019}. Some authors have reserved their definition of the BRT for this later expression \cite{Katsevich2013}. While either definition of the BRT assumes a linear combination of two CBTs sharing a common vertex, the distinction is important for inversion. Also, while positive images yield positive BRT data, SBRT data may be negative.

For tomographic imaging applications it is common to index the data according to the source and detector locations. In contrast our indexing is somewhat unconventional. In the context of the BRT, Katsevich and Krylov were the first to demonstrate the benefits of indexing the data by the scatter location \cite{Katsevich2013}. Under this indexing schema, both the CBT and BRT are linear and shift-invariant (LSI). Linear systems analysis is therefore applicable to the CBT and BRT. Their relationship is depicted in Figure \ref{fig:lsi}. This is a central theme to our contribution as we are the first to consider the two-dimensional Fourier transform of the BRT. This perspective has benefits which we will demonstrate in subsequent sections.

Our focus is limited to 2D single-scatter imaging problems where scatter events are observed throughout the media of interest. This distinction is important because the terms broken ray transform and v-line transform have been used to describe a number of related problems. We distinguish BRT problems integrating over multiple reflections \cite{Hubenthal2014,deHoop2017} or integrating over multiple vertices \cite{Morvidone2010,Truong2011}. Some constrain the vertex locations along the perimeter of the measurement geometry \cite{Hubenthal2014,deHoop2017,Haltmeier2017}. This is generally motivated by the use of Compton cameras. In three dimensions this results in the cone transform \cite{Terzioglu2018}, which we distinguish from the cone beam transform \eqref{eq:conebeamtransform}\cite{Natterer2001} applicable to our measurement geometry.

The first analytic inversion formula for the BRT is due to Florescu et al. \cite{Florescu2011}. The global inversion formula requires only two scatter angles to recover the attenuation image in the presence of spatially varying scatter density. The inversion technique can be summarized as a three-step process. First, obtain the one-dimensional Fourier transform of the data. For the second step, each frequency is considered independently. Solve the resulting complex, one-dimensional, bounded differential equation. Third, obtain the inverse one-dimensional Fourier transform across the solutions. This yields an exact reconstruction of images with bounded support. The coordinate used to index the data in the original derivation were not linear-shift invariant. This was later derived using data indexed by the scatter location and generalized for higher dimensions \cite{Gouia2014}.

The number of available scatter angles is a discriminating factor in selecting a BRT inversion strategy. A local inversion formula was discovered by Katsevich and Krylov requiring 3 unique scatter angles \cite{Katsevich2013}. In contrast with prior results, \cite{Florescu2011}, their reconstructions demonstrated significant reduction in artifacts. This was later generalized for additional scatter angles and source locations \cite{Zhao2014}. While the attenuation map can be recovered locally, the recovery of the scatter density image still requires global reconstruction of the attenuation image. This, and the requirement \eqref{eq:condition_momentum} for coherent scatter imaging, motivates our interest in 2D BRT inversion techniques using only two scatter angles.

The initial results by Florescu et al. contained significant artifacts even for trivial phantoms \cite{Florescu2011}. These artifacts were broadly attributed to the nonlocal effects of integration. Artifacts in initial results exhibited striations at three distinct angles. Two of these angles are associated with the incident and scatter directions ($\theta_i$ and $\theta_j$). However, this did not directly address the third direction. This was later explored using micro-local analysis \cite{Sherson2015}. Sherson was the first to recognize $\theta_i + \theta_j$ as the direction of integration required for inversion.

Most recently a new inversion technique was developed by Ambartsoumian and Jebelli \cite{Ambartsoumian2019}. They used linear-shift invariant indexing of the data but did not employ the Fourier transform. They thoroughly and eloquently derive a new inversion technique by extending the Fundamental Theorem of Calculus to higher dimensions under a linear change of variables. They consider V-Line transformed (VLT) data defined by the linear combination CBTs along multiple directions $\{\theta_i\}$. Integrating VLT data along the direction $\sum_i \theta_i$ yields the integral of the image over the faceted cone defined by $\{\theta_i\}$ and the common vertex $x$. This unbounded volume can be reduced to a parallelepiped by linearly combining samples of this integral. Weighting the results, one obtains a reconstruction of the image averaged over this volume. This leads to a wonderfully concise inversion formula. In two dimensions they replace differentiation along the directions $\theta_i$ and $\theta_j$ with sample differences. This leaves only integration along the direction ${\theta_i + \theta_j}$. The consequence is potential blurring over the resulting parallelogram. This can be arbitrarily small for noise-free environments with high resolution data. For noisy data, the size of the parallelogram must be larger which effects blurring in the reconstruction. Additionally, artifacts appear along the direction of integration ($\theta_i + \theta_j$) \cite{Ambartsoumian2019}.

In the following we take a fresh look at the BRT as a linear shift-invariant operator. A linear systems perspective provides new insights on the transforms and tools for contrasting prior inversion formulas. More specifically, we demonstrate images with bounded support do not guarantee data with bounded support. We are the first to consider the minimum data required for reconstruction and techniques for bounding support of the data. The two-dimensional Fourier transform of the BRT operator is especially useful for contrasting inversion techniques. It exhibits zeros along direction of integration ($\theta_i + \theta_j$) \cite{Ambartsoumian2019}. The ensuing ambiguity can be resolved analytically with boundary conditions on the reconstructed image (e.g. 0 after subtracting the background level). However, this does not address numerical sensitivity. The poles in the forward operator, along the directions $\theta_i$ and $\theta_j$, also present numerical challenges. We contrast recent work \cite{Ambartsoumian2019} against prior inversion techniques \cite{Florescu2011,Gouia2014,Sherson2015} as different strategies for addressing the poles in the forward operator. To mitigate numerical issues associated with the forward operator, we advocate Tikhonov regularization in obtaining reconstructed images. This can be implemented efficiently in the Fourier domain for arbitrary angles $\theta_i$, $\theta_j$ due to rotational invariance of the two-dimensional Fourier transform.

The remainder of this document is organized as follows. In Section \ref{sec:analysis} we provide new analysis of the BRT from a linear systems perspective. In Section \ref{sec:algorithms} we present new numeric algorithms demonstrating some benefits of the preceding analysis. In Section \ref{sec:results} we present results of numerical simulations. Finally, some conclusions are given in Section \ref{sec:discussion}.

\section{New Analysis of the Broken Ray Transform}\label{sec:analysis}
To exploit the benefits of linear system analysis, we first derive the two-dimensional Fourier transform of the BRT data. Since the BRT is LSI, we expect the result to have a specific form. We can decompose the transform of the data into the product of two terms: the transform of the image and the transform of the system function. From the transform of the system function, several insights are directly available.

We consider an absolutely integrable image with bounded support. We define a closed, bounded, convex set $C\subset\mathbb{R}^2$, which we use to window the image
\begin{equation}
\mu_C(x) = \left\{
\begin{array}{ll}
\mu(x),  & \mbox{for } x\in C \\
0, & \mbox{otherwise.}
\end{array}
\right.
\end{equation}
We use $\doublehat{\mu}_C(w)$ to represent the two-dimensional Fourier transform of the image. Since the BRT is simply a linear combination of CBTs, we first define the two-dimensional Fourier transform of the CBT data
\numparts
\begin{eqnarray}
\doublehat{b}_\theta(w)&=\mathcal{F}^2\left\{(B \mu_C)(x,\theta)\right\}\\
&= \doublehat{\mu}_C(w)\left[\frac{-1}{i2\pi w\cdot\theta} + \frac{1}{2}\delta(w\cdot\theta)\right]\label{eq:cbt_ft}.
\end{eqnarray}
\endnumparts
The details of this derivation are in \ref{sec:cbt_ft}. The two-dimensional Fourier transform of the BRT data is therefore
\numparts
\begin{eqnarray}
\fl\doublehat{g}_{i,j}(w) &= \mathcal{F}^2\left\{(G \mu_C)(x,\theta_i,\theta_j)\right\}\\
\fl &= \mathcal{F}^2\left\{(B \mu_C)(x,\theta_i)+(B \mu_C)(x,\theta_j)\right\}\\
\fl &= \doublehat{\mu}_C(w)\left[\frac{-w\cdot\left(\theta_i+\theta_j\right)}{i2\pi \left(w\cdot\theta_i\right)\left(w\cdot\theta_j\right)}+\frac{1}{2}\delta\left(w\cdot\theta_i\right)+\frac{1}{2}\delta\left(w\cdot\theta_j\right)\right].\label{eq:brt_ft}
\end{eqnarray}
\endnumparts
Indeed, this result can be decomposed into the product of two terms. The bracketed term represents the two-dimensional Fourier transform of the BRT system function. For convenience, we will frequently reference a portion of this term
\begin{equation}
\doublehat{h}_{i,j}(w)\coloneqq \frac{-w\cdot\left(\theta_i+\theta_j\right)}{i2\pi \left(w\cdot\theta_i\right)\left(w\cdot\theta_j\right)}.\label{eq:h}
\end{equation}
We emphasize $\doublehat{h}_{i,j}(w)$ is not the transform of the BRT forward operator as the delta functions have been excluded.

The expression \eqref{eq:brt_ft} highlights some challenges with BRT inversion. We observe singularities at $w\cdot\theta_i=0$ and $w\cdot\theta_j=0$. At these frequencies, finite $\doublehat{\mu}_C(w)$ does not guarantee finite $\doublehat{g}_{i,j}(w)$. As a consequence, the data may have unbounded support. 

Additionally \eqref{eq:brt_ft} demonstrates zeros in the forward operator. We define the set $\Theta_{i,j}\subset \mathbb{R}^2$ as
\begin{equation}
\Theta_{i,j} \coloneqq \left\{w : w\cdot\left(\theta_i+\theta_j\right)=0,\,w\cdot w>0 \right\}.
\end{equation}
For all $w\in\Theta_{i,j}$, we have $\doublehat{g}_{i,j}(w)=0$ for all $\doublehat{\mu}_C(w)$. In this way the BRT has a non-trivial nullspace. The zeros are limited to a line, and so the nullspace does not include images with bounded support. This does not preclude exact analytic reconstruction of images with bounded support. However, this is problematic for numeric reconstruction. We have arrived at these observations from a linear systems perspective. Similar observations were previously made applying microlocal analysis to the BRT \cite{Sherson2015}.

A Fourier representation of the image is found multiplying both sides of \eqref{eq:brt_ft} by the inverse of \eqref{eq:h}
\begin{equation}
\doublehat{\mu}_{C}(w) = \doublehat{g}_{i,j}(w)\frac{-i2\pi \left(w\cdot\theta_i\right)\left(w\cdot\theta_j\right)}{ w\cdot\left(\theta_i+\theta_j\right)},\quad\forall\,w\notin \Theta_{i,j}.\label{eq:brt_inv_fourier}
\end{equation}
Justification for removing the delta functions is given in \ref{sec:brt_inverse_fourier}. Using \eqref{eq:brt_inv_fourier} alone, we cannot recover $\doublehat{\mu}_{C}(w)$ for $w\in \Theta_{i,j}$. According to \eqref{eq:brt_ft}, $\doublehat{g}_{i,j}(w)=0$, for all $w\in\Theta_{i,j}$,  which leaves \eqref{eq:brt_inv_fourier} indeterminate. This can be resolved imposing boundary conditions on $\mu_C(x)$ or, equivalently, continuity of $\doublehat{g}_{i,j}(w)$  (i.e. applying L'H\^{o}pital's rule).

In the parlance of linear systems analysis, this inversion formula comprises two lines of zeros and one line of pole. The zeros are associated with a directional derivatives, and the poles are associated with integration. Inverting this process leads to the reconstruction formulas
\numparts
\begin{eqnarray}
\mu_C(x)&= \frac{1}{\|\theta_i+\theta_j\|}\frac{d}{d\theta_i}\frac{d}{d\theta_j}\int_{0}^\infty g_{i,j}\left(x + s\frac{\left(\theta_i+\theta_j\right)}{\|\theta_i+\theta_j\|}\right)ds\\
&= \frac{-1}{\|\theta_i+\theta_j\|}\frac{d}{d\theta_i}\frac{d}{d\theta_j}\int_{-\infty}^0 g_{i,j}\left(x + s\frac{\left(\theta_i+\theta_j\right)}{\|\theta_i+\theta_j\|}\right)ds.
\end{eqnarray}
\endnumparts
A detailed derivation is given in \ref{sec:brt_inverse_fourier}. This is a generalization of previous inversion formulas derived by other means \cite{Florescu2011, Gouia2014, Sherson2015, Ambartsoumian2019}. Sherson was the first to recognize the symmetry in this expression\cite{Sherson2015} which is useful for numeric reconstructions from noisy data. For finite data the integrals are applied over different lengths and different noise realizations. The two results can be combined to minimize variance in the reconstruction.

The delta functions in \eqref{eq:brt_ft} present some obvious challenges. This implies images with bounded support do not guarantee data with bounded support. This begs the question: what extent of data is necessary for reconstruction? We address this in Section \ref{sec:datasupport}. Additionally, these present numerical challenges when computing the Fourier transform from sampled data. We will address this in Section \ref{sec:modulation}. Combining these results we obtain Fourier-based inversion formulas in Section \ref{sec:reconstruct}.

\subsection{Complete Representation of Data With Infinite Support}\label{sec:datasupport}
It is helpful to distinguish segments of the boundary of $C$ with respect to the orthogonal basis $\theta$, $\theta^\perp$. For this we define the scalar values
\begin{eqnarray}
v^-_\theta &\coloneqq \min_{x\in C} x\cdot \theta^\perp \label{eq:v_}\\
v^+_\theta &\coloneqq \max_{x\in C} x\cdot \theta^\perp.\label{eq:vplus}\\
v_\theta &\coloneqq v_\theta^+-v_\theta^- \label{eq:v}
\end{eqnarray}
Additionally we define the auxiliary functions ${u^-_\theta,u^+_\theta:\left[v^-_\theta,v^+_\theta\right]\rightarrow\mathbb{R}}$
\begin{eqnarray}
u^-_\theta(v)\coloneqq \min t\textrm{, s.t. } t\theta + v\theta^\perp \in C\\
u^+_\theta(v)\coloneqq \max t\textrm{, s.t. } t\theta + v\theta^\perp \in C.\label{eq:u_}
\end{eqnarray}
We define non-overlapping line segments along the boundary of $C$ as functions, ${f^-,f^+:\left[v^-_\theta,v^+_\theta\right]\rightarrow C}$
\begin{eqnarray}
f^-_\theta(v)\coloneqq u^-_\theta(v)\theta + v\theta^\perp\\
f^+_\theta(v)\coloneqq u^+_\theta(v)\theta + v\theta^\perp.
\end{eqnarray}
Using these functions, we define the mutually exclusive regions ${C^{-}_\theta,C^{+}_\theta,V^{-}_\theta,V^{+}_\theta\subset\mathbb{R}^2}$:
\begin{eqnarray}
C^-_\theta&\coloneqq \left\{x : x\cdot\theta < u^-_\theta(x\cdot \theta^\perp;\theta), x\cdot\theta^\perp \in \left[v^-_\theta,v^+_\theta\right]\right\}\label{eq:C_}\\
C^+_\theta&\coloneqq \left\{x : x\cdot\theta > u^+_\theta(x\cdot \theta^\perp;\theta), x\cdot\theta^\perp \in \left[v^-_\theta,v^+_\theta\right]\right\}\label{eq:C+}\\
V^-_\theta&\coloneqq \left\{x :  x\cdot\theta^\perp < v^-_\theta\right\}\\
V^+_\theta&\coloneqq \left\{x :  x\cdot\theta^\perp > v^+_\theta\right\}.\label{eq:V+}
\end{eqnarray}  
These definitions are illustrated in Figure \ref{fig:cbt_data_regions}.

\begin{figure}%
	\centering
	\subfloat[CBT data regions]{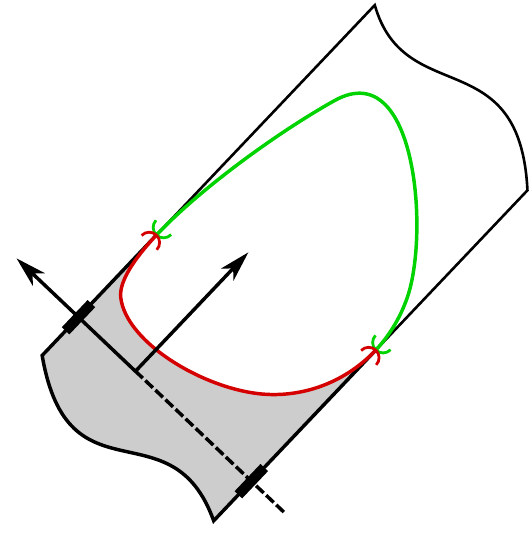\label{fig:cbt_data_regions}}\qquad
	\subfloat[BRT data regions]{
\begingroup%
  \makeatletter%
  \providecommand\color[2][]{%
    \errmessage{(Inkscape) Color is used for the text in Inkscape, but the package 'color.sty' is not loaded}%
    \renewcommand\color[2][]{}%
  }%
  \providecommand\transparent[1]{%
    \errmessage{(Inkscape) Transparency is used (non-zero) for the text in Inkscape, but the package 'transparent.sty' is not loaded}%
    \renewcommand\transparent[1]{}%
  }%
  \providecommand\rotatebox[2]{#2}%
  \newcommand*\fsize{\dimexpr\f@size pt\relax}%
  \newcommand*\lineheight[1]{\fontsize{\fsize}{#1\fsize}\selectfont}%
  \ifx\svgwidth\undefined%
    \setlength{\unitlength}{156bp}%
    \ifx\svgscale\undefined%
      \relax%
    \else%
      \setlength{\unitlength}{\unitlength * \real{\svgscale}}%
    \fi%
  \else%
    \setlength{\unitlength}{\svgwidth}%
  \fi%
  \global\let\svgwidth\undefined%
  \global\let\svgscale\undefined%
  \makeatother%
  \begin{picture}(1,0.9590682)%
    \lineheight{1}%
    \setlength\tabcolsep{0pt}%
    \put(0,0){\includegraphics[width=\unitlength]{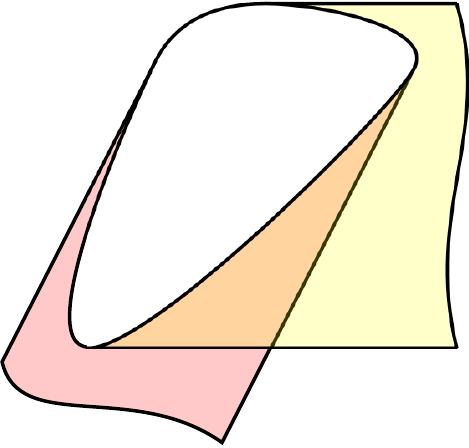}}%
    \put(0.47329335,0.63419812){\color[rgb]{0,0,0}\makebox(0,0)[t]{\lineheight{1.25}\smash{\begin{tabular}[t]{c}$C$\end{tabular}}}}%
    \put(0.80757818,0.37315541){\color[rgb]{0,0,0}\makebox(0,0)[t]{\lineheight{1.25}\smash{\begin{tabular}[t]{c}$C_i^-$\end{tabular}}}}%
    \put(0.37563705,0.12713683){\color[rgb]{0,0,0}\makebox(0,0)[lt]{\lineheight{1.25}\smash{\begin{tabular}[t]{l}$C_j^-$\end{tabular}}}}%
    \put(0.35122298,0.25484111){\color[rgb]{0,0,0}\makebox(0,0)[lt]{\lineheight{1.25}\smash{\begin{tabular}[t]{l}$C_i^-\cap C_j^-$\end{tabular}}}}%
  \end{picture}%
\endgroup%
\label{fig:brt_data_regions}}
	\caption{Images with bounded support, $C$, do not guarantee data with bounded support for either the CBT or BRT. For the CBT the support of the data is extended indefinitely in one direction, $-\theta$, over the region $C^-_\theta$ as depicted in Figure \ref{fig:cbt_data_regions}. For the BRT the support of the data is extended in two directions. Depending on the shape of $C$, the resulting regions $C_i^-$ and $C_j^-$ may intersect.}
	\label{fig:referenceImageAndBRT}
\end{figure}

With definitions in place, we make some observations regarding the support of the CBT data. We state them as three theorems. First, we limit support of the data. The CBT data are zero for all $x$ outside the support of the image, $C$, and the shadow region $C^-_\theta$. 
\begin{theorem}
$(B \mu_C)(x,\theta)=0$ for all $x\in \overline{C\cup C^-_\theta}$\label{th:cbt_0}
\end{theorem}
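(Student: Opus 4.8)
The plan is to reduce the statement to the elementary geometric fact that, for $x$ outside $C\cup C^-_\theta$, the ray of integration $\{x+t\theta : t\ge 0\}$ never meets the support set $C$, so the integrand in \eqref{eq:conebeamtransform} vanishes identically in $t$. First I would fix $\theta\in S^1$ and resolve every point in the orthonormal basis $\theta,\theta^\perp$, writing $y=(y\cdot\theta)\theta+(y\cdot\theta^\perp)\theta^\perp$. The key observation is that along the integration ray the $\theta^\perp$-coordinate is constant, $(x+t\theta)\cdot\theta^\perp=x\cdot\theta^\perp$, while the $\theta$-coordinate is $x\cdot\theta+t$, which sweeps out $[x\cdot\theta,\infty)$. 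Hence whether $\mu_C(x+t\theta)$ is ever nonzero depends only on whether this half-line meets $C$.

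Next I would record the slice description of $C$ implied by the definitions: by \eqref{eq:v_}--\eqref{eq:vplus}, $C$ lies in the slab $\{y:y\cdot\theta^\perp\in[v^-_\theta,v^+_\theta]\}$, and by convexity of $C$, for each $v\in[v^-_\theta,v^+_\theta]$ the slice $\{y\in C: y\cdot\theta^\perp=v\}$ equals exactly the segment $\{t\theta+v\theta^\perp : u^-_\theta(v)\le t\le u^+_\theta(v)\}$ (convexity is precisely what makes this slice a single interval). With this in hand I would split into the cases that exhaust the complement of $C\cup C^-_\theta$ using \eqref{eq:C_}--\eqref{eq:V+}, namely $x\in V^-_\theta\cup V^+_\theta$ and $x\in C^+_\theta$. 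If $x\in V^\pm_\theta$ then $x\cdot\theta^\perp\notin[v^-_\theta,v^+_\theta]$, so the entire ray misses the slab containing $C$ and $\mu_C(x+t\theta)=0$ for all $t\ge 0$. If $x\in C^+_\theta$ then $x\cdot\theta^\perp\in[v^-_\theta,v^+_\theta]$ but $x\cdot\theta>u^+_\theta(x\cdot\theta^\perp)$, hence for every $t\ge 0$ the $\theta$-coordinate $x\cdot\theta+t$ still exceeds $u^+_\theta(x\cdot\theta^\perp)$, so by the slice description $x+t\theta\notin C$ and again the integrand vanishes. In all cases $(B\mu_C)(x,\theta)=\int_0^\infty\mu_C(x+t\theta)\,dt=0$.

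The main obstacle is not difficulty but bookkeeping at the boundaries of the regions: \eqref{eq:C_}--\eqref{eq:V+} use strict inequalities, so one must either check that the leftover boundary points lie in $C$ (which is closed) or in $C^-_\theta$, or simply remark that a ray grazing only $\partial C$ meets $C$ in a set of one-dimensional measure zero and thus contributes nothing to the integral. Everything else is immediate once the slice description of $C$ in terms of $u^-_\theta,u^+_\theta$ is laid out, which is where essentially all of the (routine) work lies.
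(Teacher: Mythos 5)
Your proposal is correct and follows essentially the same route as the paper: partition the complement of $C\cup C^-_\theta$ into $C^+_\theta$, $V^-_\theta$, and $V^+_\theta$, and observe that from any point in these regions the forward ray $\{x+t\theta : t\ge 0\}$ never enters $C$, so the integrand vanishes. Your extra care with the slice description of $C$ and the boundary bookkeeping is a slightly more explicit rendering of the paper's one-line invariance claim ``$x+s\theta\in C^+_\theta$ for all $s\ge 0$,'' but it is not a different argument.
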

\begin{proof}
	The set $\overline{C\cup C^-_\theta}$ can be partitioned into three regions $C^+_\theta$, $V^+_\theta$, and $V^-_\theta$, where
	\begin{equation}
	\overline{C\cup C^-_\theta} = C^+_\theta \cup V^+_\theta \cup V^-_\theta.
	\end{equation}
Since $C^+_\theta\cap C=\varnothing$, we have $\mu_C(x)=0$ for all $x\in C^+_\theta$. Additionally, if $x\in C^+_\theta$, then according to the definition \eqref{eq:C+}, ${x+s \theta \in C^+_\theta}$ for all $s\geq0$. Therefore ${(B\mu_C)(x,\theta)=0}$ for all $x\in C^+_\theta$. The same is true for $V^-_\theta$ and $V^+_\theta$. 
\end{proof}

Next we observe the CBT, over the region $C_\theta^-$, is constant along the direction $\theta$. The values are determined by the Radon transform at $\theta$.

\begin{theorem}
	$(B \mu_C)(x,\theta)=(R \mu_C)(x\cdot\theta^\perp,\theta)$ for all $x\in C^-_\theta$\label{th:cbt_minus}
\end{theorem}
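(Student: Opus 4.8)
The plan is to reduce both sides to integrals over the same line and then compare integrands. Fix $x\in C^-_\theta$ and set $v\coloneqq x\cdot\theta^\perp$; by the definition \eqref{eq:C_} we have $v\in[v^-_\theta,v^+_\theta]$ and $x\cdot\theta < u^-_\theta(v)$. Since $\{\theta,\theta^\perp\}$ is an orthonormal basis of $\mathbb{R}^2$, write $x = (x\cdot\theta)\theta + v\theta^\perp$, so the incident ray in \eqref{eq:conebeamtransform} is $x + t\theta = \big((x\cdot\theta)+t\big)\theta + v\theta^\perp$. Substituting $s = (x\cdot\theta)+t$ gives
\[
(B\mu_C)(x,\theta) = \int_{x\cdot\theta}^{\infty}\mu_C(s\theta + v\theta^\perp)\,ds,
\]
an integral along the very line $\{s\theta + v\theta^\perp : s\in\mathbb{R}\}$ that defines $(R\mu_C)(v,\theta)$ in \eqref{eq:radon_restatement}.

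Next I would use convexity of $C$ to pin down the support of the integrand on this line. Because $C$ is convex, $\{s\in\mathbb{R} : s\theta + v\theta^\perp\in C\}$ is an interval, and by the definitions in \eqref{eq:u_} (together with $v\in[v^-_\theta,v^+_\theta]$, which makes this set nonempty: connect the minimizer and maximizer of $x\cdot\theta^\perp$ over $C$ by a segment in $C$) that interval is exactly $[u^-_\theta(v),u^+_\theta(v)]$. Hence $\mu_C(s\theta+v\theta^\perp)=0$ for every $s < u^-_\theta(v)$. Since $x\cdot\theta < u^-_\theta(v)$, the part of the line with $s\le x\cdot\theta$ contributes nothing, so
\[
(B\mu_C)(x,\theta) = \int_{x\cdot\theta}^{\infty}\mu_C(s\theta+v\theta^\perp)\,ds = \int_{-\infty}^{\infty}\mu_C(s\theta+v\theta^\perp)\,ds = (R\mu_C)(v,\theta),
\]
which is the assertion once we recall $v = x\cdot\theta^\perp$.

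The proof is a change of variables plus one support-bookkeeping step, so there is no deep obstacle; the single point that deserves care is the claim that the negative-$\theta$ tail of the ray avoids $C$. This is where convexity of $C$ is genuinely used — it forces the chord of $C$ along the line through $v\theta^\perp$ to be one interval with left endpoint $u^-_\theta(v)$, so that lying below the minimal $t$-coordinate really does mean lying outside $C$; for a merely bounded (non-convex) $C$ the line could re-enter $C$ at small $s$ and the identity would break, and I would flag this explicitly. A minor additional remark is that $C^-_\theta\cap C=\varnothing$ by construction, so there is no ambiguity in evaluating $\mu_C$ along the ray. The boundary cases $v\in\{v^-_\theta,v^+_\theta\}$, where the chord may degenerate to a point, are harmless since both sides then vanish.
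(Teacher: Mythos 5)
Your proof is correct and follows essentially the same route as the paper's: extend the CBT integral backward to $-\infty$, observe that the added tail misses the support of $\mu_C$ because $x\cdot\theta < u^-_\theta(v)$, and identify the resulting full-line integral with the Radon transform \eqref{eq:radon_restatement}. The only quibble is that convexity is not actually needed for the support-bookkeeping step you flag: since $u^-_\theta(v)$ is \emph{defined} as the minimum $t$ with $t\theta+v\theta^\perp\in C$, any $s<u^-_\theta(v)$ already satisfies $s\theta+v\theta^\perp\notin C$ whether or not the chord is a single interval, so the identity would not break for non-convex $C$.
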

\begin{proof}
	For all $x\in C^-_\theta$, we can extend the integral of the CBT
	\begin{equation}
	(B\mu_C)(x,\theta) = \int_{-\infty}^{0}\mu_C(x + t\theta) + \int_{0}^{\infty}\mu_C(x + t\theta)dt
	\end{equation}
	where the first term is 0 due to the bounded support of $\mu_C$. Combining these integrals and expanding $x$ along the orthogonal basis vectors $\theta$ and $\theta^\perp$, we have
	\numparts
	\begin{eqnarray}
	(B\mu_C)(x,\theta) &= \int_{-\infty}^{\infty}\mu_C\left(\left(x\cdot\theta^\perp\right)\theta^\perp + t\theta\right)dt\\
	&= (R\mu_C)(x\cdot\theta^\perp,\theta).
	\end{eqnarray}	\endnumparts
\end{proof}

Finally, the Radon transform, for fixed direction $\theta$, is given by the CBT along the boundary of $C$.
\begin{theorem}
${(R \mu_C)(v,\theta)=(B\mu_C)(f^-_\theta(v),\theta)}$ for all ${v\in\left[v^-_\theta,v^+_\theta\right]}$ and ${(R \mu_C)(v,\theta)=0}$ for all  ${v\notin\left[v^-_\theta,v^+_\theta\right]}$.\label{th:cbt_bound}
\end{theorem}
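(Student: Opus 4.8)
The plan is to treat the two asserted cases separately, reducing the Radon integral to a cone beam integral using only the definitions of $v^\pm_\theta$, $u^-_\theta$, $f^-_\theta$ and the convexity/closedness of $C$.

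For $v\notin\left[v^-_\theta,v^+_\theta\right]$ I would observe that every point of the line $\{v\theta^\perp+t\theta : t\in\mathbb{R}\}$ has $\theta^\perp$-coordinate equal to $v$, which by \eqref{eq:v_} and \eqref{eq:vplus} lies strictly outside the range of $x\cdot\theta^\perp$ taken over $C$. Hence this line is disjoint from $C$, so $\mu_C$ vanishes identically along it and $(R\mu_C)(v,\theta)=0$, as claimed.

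For $v\in\left[v^-_\theta,v^+_\theta\right]$, note first that since $C$ is closed and bounded the set $\{t : t\theta+v\theta^\perp\in C\}$ is nonempty and compact, so $u^-_\theta(v)$, and therefore $f^-_\theta(v)=u^-_\theta(v)\theta+v\theta^\perp$, is well defined and lies in $C$. I would then expand $(B\mu_C)(f^-_\theta(v),\theta)=\int_0^\infty \mu_C\big((u^-_\theta(v)+t)\theta+v\theta^\perp\big)\,dt$ and apply the change of variable $s=u^-_\theta(v)+t$ to get $\int_{u^-_\theta(v)}^\infty \mu_C(s\theta+v\theta^\perp)\,ds$. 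Comparing with $(R\mu_C)(v,\theta)=\int_{-\infty}^\infty \mu_C(s\theta+v\theta^\perp)\,ds$, the two differ only by $\int_{-\infty}^{u^-_\theta(v)} \mu_C(s\theta+v\theta^\perp)\,ds$, which vanishes because, by the minimality in the definition of $u^-_\theta(v)$, no $s<u^-_\theta(v)$ yields a point of $C$, so $\mu_C$ is zero on that half-line. This establishes the equality.

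The only point requiring a little care — not really an obstacle — is the behavior of the section $\{t : t\theta+v\theta^\perp\in C\}$: convexity of $C$ makes it an interval, which is what lets me conclude that "below" $u^-_\theta(v)$ the line has genuinely left $C$, and closedness ensures the minimum defining $u^-_\theta(v)$ is attained so that $f^-_\theta(v)\in\partial C$ is meaningful. I would close by remarking that, together with Theorems \ref{th:cbt_0} and \ref{th:cbt_minus}, this shows the full Radon data at a fixed $\theta$ is recovered from the CBT evaluated along the boundary curve $f^-_\theta$, which is the fact used in Section \ref{sec:datasupport}.
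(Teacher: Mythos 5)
Your proposal is correct and follows essentially the same route as the paper's proof: split the line integral at $u^-_\theta(v)$, observe that the portion with $t<u^-_\theta(v)$ vanishes by the minimality defining $u^-_\theta$, and identify the remaining half-line integral with $(B\mu_C)(f^-_\theta(v),\theta)$ after a shift of variable, while the case $v\notin[v^-_\theta,v^+_\theta]$ follows because the whole line then misses $C$. The only cosmetic difference is that you expand the CBT and compare it to the Radon transform rather than splitting the Radon integral directly, and you note explicitly (correctly, though it is not needed for the vanishing step, which follows from minimality alone) that convexity makes each section an interval.
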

\begin{proof}
For ${v\in\left[v^-_\theta,v^+_\theta\right]}$, we can expand
\numparts
\begin{eqnarray}
(R \mu_C)(v,\theta)&= \int_{-\infty}^{u^-(v)}\mu_C(v\theta^\perp + t\theta)dt+\int_{u^-(v)}^{\infty}\mu_C(v\theta^\perp + t\theta)dt\\
&=\int_{0}^{\infty}\mu_C(v\theta^\perp + u^-_\theta(v)\theta + t\theta)dt\\
&=(B\mu_C)(f^-_\theta(v),\theta)
\end{eqnarray}
\endnumparts
For $v<v^-$, we have ${\mu_C(v\theta^\perp+t\theta)=0}$, since ${v\theta^\perp+t\theta\in V^-_\theta}$ for all $t$. Therefore $(R \mu_C)(v,\theta)=0$ for $v<v^-_\theta$. The same can be shown for $v>v^+_\theta$.
\end{proof}

\begin{corollary}
	$(B \mu_C)(x,\theta)=(R \mu_C)(x\cdot\theta^\perp,\theta)$ for all ${x\in V^-_\theta \cup C^-_\theta \cup V^+_\theta}$\label{th:cbt_extend}
\end{corollary}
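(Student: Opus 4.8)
The plan is to split the domain $V^-_\theta \cup C^-_\theta \cup V^+_\theta$ into the two pieces on which the two sides of the claimed identity are controlled by results already established, and to check that they agree on each piece. On $C^-_\theta$ there is nothing new to do: Theorem \ref{th:cbt_minus} already asserts exactly $(B\mu_C)(x,\theta)=(R\mu_C)(x\cdot\theta^\perp,\theta)$ there, so that case is immediate.

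It remains to handle $x\in V^-_\theta\cup V^+_\theta$, and the strategy is to show that \emph{both} sides vanish. For the left-hand side, note that $V^-_\theta$ and $V^+_\theta$ are two of the three regions into which $\overline{C\cup C^-_\theta}$ was partitioned in the proof of Theorem \ref{th:cbt_0}; hence that theorem gives $(B\mu_C)(x,\theta)=0$ for every $x\in V^-_\theta\cup V^+_\theta$. For the right-hand side, observe that $x\in V^-_\theta$ means $x\cdot\theta^\perp<v^-_\theta$ and $x\in V^+_\theta$ means $x\cdot\theta^\perp>v^+_\theta$, so in either case $x\cdot\theta^\perp\notin\left[v^-_\theta,v^+_\theta\right]$; Theorem \ref{th:cbt_bound} then yields $(R\mu_C)(x\cdot\theta^\perp,\theta)=0$. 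Thus both sides are zero and therefore equal on $V^-_\theta\cup V^+_\theta$, which together with the $C^-_\theta$ case completes the proof.

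There is no substantial obstacle here — the statement is genuinely a corollary, assembled from Theorems \ref{th:cbt_minus}, \ref{th:cbt_0}, and \ref{th:cbt_bound}. The only things to be careful about are bookkeeping points: verifying that membership in $V^\pm_\theta$ really does place $x$ in the complement set handled by Theorem \ref{th:cbt_0} (so that the left side vanishes) and that it forces $x\cdot\theta^\perp$ out of the interval $\left[v^-_\theta,v^+_\theta\right]$ (so that the Radon transform vanishes), and noting that the three regions $V^-_\theta$, $C^-_\theta$, $V^+_\theta$ are mutually exclusive so the case split is exhaustive and unambiguous.
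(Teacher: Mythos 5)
Your proof is correct and follows essentially the same route as the paper's: Theorem \ref{th:cbt_minus} handles $C^-_\theta$ directly, Theorem \ref{th:cbt_0} gives vanishing of the CBT on $V^\pm_\theta$, and Theorem \ref{th:cbt_bound} gives vanishing of the Radon transform there because $x\cdot\theta^\perp$ falls outside $\left[v^-_\theta,v^+_\theta\right]$. No issues.
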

\begin{proof}
	Theorem \ref{th:cbt_minus} demonstrates equality for $x\in C^-_\theta$. For ${x\in V^-_\theta \cup V^+_\theta}$, we have $(B \mu_C)(x,\theta)=0$ according to Theorem \ref{th:cbt_0}. For ${x\in V^-_\theta}$, we have $x\cdot\theta^\perp<v^-_\theta$. Therefore, ${(R \mu_C)(x\cdot\theta,\theta)=0}$ according to Theorem \ref{th:cbt_bound}. Similarly, ${(R \mu_C)(x\cdot\theta,\theta)=0}$ for ${x\in V^+_\theta}$.
\end{proof}

Theorem \ref{th:cbt_minus} demonstrates images with bounded support do not guarantee CBT data with bounded support since $C^-_\theta$ is unbounded. This is problematic for discrete Fourier analysis. However, data outside the support of the original image is redundant. If $(B\mu_C)(x,\theta)$ is known for all $x\in C$ including its boundary, $(R \mu_C)(v,\theta)$ is available. Combining Theorem \ref{th:cbt_0} and Theorem \ref{th:cbt_minus}, the CBT is then known for all $x\in \mathbb{R}^2$. This is significant as there may be problems for which data are not available outside the support of the original image. This demonstrates samples along the boundary, or alternatively direct-path (ballistic) measurements, are sufficient. Once this minimum extent of data are available, CBT data can be extended arbitrarily. 

For our problems of interest CBT data are not available directly. The BRT is a linear combination of two CBTs sharing a common vertex. Similar to CBT data, bounded support of the image does not guarantee bounded support of the BRT data. The previous analysis of CBT data informs the sampling requirements on BRT data. Using the definition of the BRT \eqref{eq:brokenraytransform}, we distinguish two directions $\theta_i\neq\theta_j$. In addition to knowing $(G\mu_C)(x,\theta_i,\theta_j)$ for all $x\in C$, we additionally require the Radon transform in two direction: $(R \mu_C)(v,\theta_i)$, and $(R \mu_C)(v,\theta_j)$. The complication lies in the partitions of the BRT data. BRT data requires additional partitions which may overlap. Resolving the Radon transform, with respect to two directions, from the BRT data is more challenging.

Following the previous work, our definitions for $C$ and $\mu_C$ need not change. However, we use $i$ and $j$ to distinguish the directions the subscripts of the definitions \eqref{eq:v_}-\eqref{eq:V+}. These indices are used only in subscripts to avoid confusion with the imaginary unit $i\coloneqq \sqrt{-1}$. Depending on $C$, $\theta_i$, and $\theta_j$, the set $C_i^-\cap C_j^-$ may be nontrivial. The BRT data can be partitioned
\begin{equation}
\fl (G \mu_C)(x,\theta_i,\theta_j) = 
\cases{
	(B \mu_C)(x,\theta_i) + (B \mu_C)(x,\theta_j)  & for $x\in C$ \\
	(R \mu_C)(x\cdot\theta_i^\perp,\theta_i)+(R \mu_C)(x\cdot\theta_j^\perp,\theta_j)  & for $x\in C_i^-\cap C_j^-$ \\
	(R \mu_C)(x\cdot\theta_i^\perp,\theta_i)  & for $x\in C_i^- \setminus C_j^-$\\
	(R \mu_C)(x\cdot\theta_j^\perp,\theta_j)  & for $x\in C_j^- \setminus C_i^-$\\
	0 & otherwise.}\label{eq:brt_partitions}
\end{equation}
These regions are depicted in Figure \ref{fig:brt_data_regions}. 

In contrast to the CBT, we must distinguish $(R \mu_C)(v,\theta_i)$ from $(R \mu_C)(v,\theta_j)$. Over $C$ alone, they may not be directly available. We consider two scenarios. First, for some regions $C$ and scatter angles $\theta_i,\theta_j$, the set $C_i^-\cap C_j^-$ is empty. For example, this is true for rectangular $C$, when $\theta_i$ is parallel to a boundary of $C$, and $\theta_i\cdot\theta_j\leq0$. In such cases, $(R \mu_C)(v,\theta_i)$ from $(R \mu_C)(v,\theta_j)$ can be distinguished along the boundary of $C$. As a second scenario, forward scatter (ballistic) measurements at the two angles can be used to measure the Radon transforms directly. This would require new measurements. However, this may be useful for some modalities if measurements over the boundary of $C$ are not available.

The notation introduced in this section is also useful for simplifying the assumed support of the image. Due to shift-invariance of the BRT, we can assume the image is centered about the origin without loss of generality.
\begin{definition}
	Let $C$ represent a closed and bounded region in $\mathbb{R}^2$, and let $\theta_i$ and $\theta_j$ represent unique directions such that $|\theta_i\cdot\theta_j|<1$. We define $v_i^-$, $v_i^+$, $v_j^-$, $v_j^+$ using \eqref{eq:v_} and \eqref{eq:vplus}. Then, $C$ is centered with respect to $\theta_i$ and $\theta_j$ when both ${v_i^+ =-v_i^-}$ and ${v_j^+=-v_j^-}$.\label{def:centered}
\end{definition}

Parallelograms are an important geometric shape in the context of the BRT. This was first recognized by Ambartsoumian and Jebelli \cite{Ambartsoumian2019}. It is often convenient to extend $C$ to the circumscribed parallelogram.
\begin{definition}
	Let $C$ represent a closed and bounded region centered with respect to $\theta_i$ and $\theta_j$. The circumscribed parallelogram, with edges parallel to $\theta_i$ and $\theta_j$, is given by 
	\begin{equation}
	P \coloneqq \left\{x : \Pi_{v_i}\left(x\cdot\theta_i^\perp\right)\Pi_{v_j}\left(x\cdot\theta_j^\perp\right)>0\right\},\label{eq:p:v}
	\end{equation}
	where $v_i$, $v_j$ are defined according to \eqref{eq:v_}-\eqref{eq:v}. Clearly, $C\subseteq P \subset \mathbb{R}^2.$.\label{def:p}
\end{definition}
In \eqref{eq:p:v}, $P$ is expressed in terms of the orthogonal distance between parallel sides. Alternatively, we obtain the edge lengths $\alpha_i$ and $\alpha_j$ for the edges parallel to $\theta_i$ and $\theta_j$, respectively
\begin{eqnarray}
\alpha_i &\coloneqq v_j/\left|\det\left(\theta_i,\theta_j\right)\right|\label{eq:alphai}\\
\alpha_j &\coloneqq v_i/\left|\det\left(\theta_i,\theta_j\right)\right|\label{eq:alphaj}.
\end{eqnarray}
We can equivalently express $P$ in terms of the edge lengths
\begin{equation}
P = \left\{s_i\theta_i + s_j\theta_j; \left|s_i\right|\leq \alpha_i/2, \left|s_j\right|\leq \alpha_j/2\right\}.\label{eq:p:alpha}
\end{equation}
Related to $P$, we define a parallelogram indicator function in \ref{sec:parallelogram} and derive its two-dimensional Fourier transform. The results will be referenced frequently in subsequent sections.

\subsection{Filtering Unbounded Support of the Data}\label{sec:modulation}
When the Fourier transform must be determined numerically, unbounded support of the BRT is problematic. Simply truncating BRT data effects blurring in the frequency domain. This corrupts the spectral representation and invalidates the previous Fourier reconstruction methods.  Alternatively, we consider convolving data in the spatial domain. We define a generalized point spread function (PSF) such that the shifted copies of the data combine destructively outside a bounded region of support. 

We consider the PSF
\begin{eqnarray}
m_{i,j}(x;a_i,a_j) &= \delta\left(x+\frac{a_i}{2}\theta_i +\frac{a_j}{2}\theta_j \right) - \delta\left(x-\frac{a_i}{2}\theta_i +\frac{a_j}{2}\theta_j \right)\nonumber \\
&\quad- \delta\left(x+\frac{a_i}{2}\theta_i -\frac{a_j}{2}\theta_j \right) + \delta\left(x-\frac{a_i}{2}\theta_i -\frac{a_j}{2}\theta_j \right).\label{eq:brt_mod_symmetric}
\end{eqnarray}
Here $a_i,a_j > 0$, determine the shift lengths. The expression \eqref{eq:brt_mod_symmetric} has the Fourier transform
\begin{equation}
\doublehat{m}_{i,j}(w;a_i,a_j) = -4\sin\left(\pi a_i w\cdot\theta_i\right)\sin\left(\pi a_j w\cdot\theta_j\right).\label{eq:brt_mod_symmetric_ft}
\end{equation}
To reduce the number of variables defined we introduce new notation to distinguish signals, which support expansion using the PSF function \eqref{eq:brt_mod_symmetric}. We define
\begin{eqnarray}
\doublehat{g}^{m}_{i,j}(w;a_i,a_j) \coloneqq \doublehat{g}_{i,j}(w)\doublehat{m}_\theta(w;a_i,a_j) \label{eq:brt_modulated_fourier_1}\\
\doublehat{\mu}^{m}_{i,j}(w;a_i,a_j) \coloneqq \doublehat{\mu}_{C}(w)\doublehat{m}_\theta(w;a_i,a_j). \label{eq:img_modulated_fourier_1}
\end{eqnarray}
The same superscript $m$ will be subsequently applied to continuous signals in the spatial domain, and sampled signals. Plugging \eqref{eq:brt_ft} and \eqref{eq:brt_mod_symmetric} into \eqref{eq:brt_modulated_fourier_1} we have
\begin{eqnarray}
\fl \doublehat{g}^{m}_{i,j}(w;a_i,a_j) =  \doublehat{\mu}_C(w) &\left[-4\sin\left(\pi a_i w\cdot\theta_i\right)\sin\left(\pi a_j w\cdot\theta_j\right) \frac{-w\cdot\left(\theta_i+\theta_j\right)}{i2\pi \left(w\cdot\theta_i\right)\left(w\cdot\theta_j\right)}\right.\nonumber\\
\fl &-4\sin\left(\pi a_i w\cdot\theta_i\right)\sin\left(\pi a_j w\cdot\theta_j\right)\frac{1}{2}\delta\left(w\cdot\theta_i\right)\nonumber\\
\fl &\left.-4\sin\left(\pi a_i w\cdot\theta_i\right)\sin\left(\pi a_j w\cdot\theta_j\right)\frac{1}{2}\delta\left(w\cdot\theta_j\right)\right].
\end{eqnarray}
The inverse two-dimensional Fourier transform of this expression involves integration over $w$. Due to the sampling property of the delta function, and since $\sin(0)=0$, the final two bracketed terms vanish under integration. By the uniqueness of the Fourier transform, we have
\numparts
\begin{eqnarray}
\doublehat{g}^{m}_{i,j}(w;a_i,a_j) &= \doublehat{\mu}_C(w)\doublehat{m}_{i,j}(w;a_i,a_j)\frac{-w\cdot\left(\theta_i+\theta_j\right)}{i2\pi \left(w\cdot\theta_i\right)\left(w\cdot\theta_j\right)}\label{eq:brt_modulated_fourier}\\
&= \doublehat{\mu}_{i,j}^m\left(w;a_i,a_j\right) \doublehat{h}_{i,h}(w).\label{eq:brt_filt_forward}
\end{eqnarray}
\endnumparts
In \eqref{eq:brt_filt_forward} we make use of both \eqref{eq:img_modulated_fourier_1}, and \eqref{eq:h}. Since the BRT is LSI, this result is expected. Filtering the input to an LSI system is equivalent to filtering the output. The significance is that the delta functions vanish when we filter the data using \eqref{eq:brt_mod_symmetric}.

We obtain another useful form by expanding \eqref{eq:brt_modulated_fourier} using \eqref{eq:brt_mod_symmetric_ft}. We reappropriate the denominator of $\doublehat{h}_{i,j}(w)$ to find
\begin{equation}
\fl \doublehat{g}^{m}_{i,j}(w;a_i,a_j)  = -i 2\pi w\cdot\left(\theta_i+\theta_j\right) \doublehat{\mu}_C(w)a_i a_j \sinc\left(a_i w\cdot\theta_i\right)\sinc\left(a_j w\cdot\theta_j\right)\label{eq:brt_sinc}.
\end{equation}
The product of $\sinc$ functions in \eqref{eq:brt_sinc} is associated with a parallelogram window function as demonstrated in \ref{sec:parallelogram}. This motivates the definition
\begin{equation}
\doublehat{\mu}_{i,j}^p\left(w;a_i,a_j\right)\coloneqq \frac{\doublehat{\mu}_C(w) \doublehat{p}_{i,j}\left(w;a_i,a_j\right)}{a_i a_j\left|\det\left(\theta_i,\theta_j\right)\right|}\label{eq:img_modulated_fourier_p}
\end{equation}
where $\doublehat{p}_{i,j}\left(w;a_i,a_j\right)$ is defined according to \eqref{eq:pft:a}. The scaling is motivated by \eqref{eq:p:area}. Using \eqref{eq:img_modulated_fourier_p} in \eqref{eq:brt_sinc}, we have
\begin{equation}
\doublehat{g}^{m}_{i,j}(w;a_i,a_j) =-i2\pi w\cdot\left(\theta_i+\theta_j\right)a_i a_j \doublehat{\mu}_{i,j}^p\left(w;a_i,a_j\right)\label{eq:brt_p}.
\end{equation}
Taking the inverse two-dimensional Fourier transform of \eqref{eq:brt_p} we find 
\begin{equation}
g^{m}_{i,j}(x;a_i,a_j) = -\frac{d}{d \left(\theta_i+\theta_j\right)}a_i a_j \mu_{i,j}^p\left(x;a_i,a_j\right).\label{eq:brt_mod_data}
\end{equation}
Here the first term represents the directional derivative in the direction $\theta_i+\theta_j$. This is clearly not a unit vector. In this form we observe $g^{m}_{i,j}(x;a_i,a_j)$ has bounded support.

\begin{theorem}
	For an absolutely integrable image with bounded support, filtering the BRT data with the PSF \eqref{eq:brt_mod_symmetric} bounds support of the data for all $a_i,\,a_j\in\left(0,\infty\right)$. Additionally, the data are finite everywhere.
\end{theorem}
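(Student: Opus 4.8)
The plan is to establish the two claims — bounded support and finiteness everywhere — separately, using the spatial-domain identity \eqref{eq:brt_mod_data} for the first and the Fourier-domain identity \eqref{eq:brt_sinc} for the second. For the support claim, I would work directly with the convolution definition of the filtered data, $g^m_{i,j}(\cdot;a_i,a_j) = g_{i,j} * m_{i,j}(\cdot;a_i,a_j)$, rather than invoking \eqref{eq:brt_mod_data}, so as to keep the argument elementary. First I would recall from \eqref{eq:brt_partitions} and the surrounding discussion that, outside $C$, the unbounded part of $g_{i,j}$ lives on $C_i^-$ and $C_j^-$, and on each of those regions $g_{i,j}$ is constant along the corresponding direction ($\theta_i$ on $C_i^-\setminus C_j^-$, $\theta_j$ on $C_j^-\setminus C_i^-$, and an additive superposition of the two on $C_i^-\cap C_j^-$), by Theorem \ref{th:cbt_minus}. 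The PSF \eqref{eq:brt_mod_symmetric} is a tensor second difference: $\delta(x+\tfrac{a_i}{2}\theta_i) - \delta(x-\tfrac{a_i}{2}\theta_i)$ convolved with $\delta(x+\tfrac{a_j}{2}\theta_j) - \delta(x-\tfrac{a_j}{2}\theta_j)$. Convolving a function that is constant along $\theta_i$ with the first-difference factor $\delta(x+\tfrac{a_i}{2}\theta_i)-\delta(x-\tfrac{a_i}{2}\theta_i)$ annihilates it; likewise for $\theta_j$. So on the ``far'' parts of $C_i^-$ and $C_j^-$ — those points whose $\pm\tfrac{a_i}{2}\theta_i,\pm\tfrac{a_j}{2}\theta_j$ shifts all remain inside the region where the constancy holds — the filtered data vanish. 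The only surviving contributions come from a bounded neighborhood of $C$ (a $C$ fattened by $\tfrac{a_i}{2}|\theta_i| + \tfrac{a_j}{2}|\theta_j|$ in the relevant directions), which is bounded; hence $g^m_{i,j}$ has bounded support. Alternatively and more slickly, I would just cite \eqref{eq:brt_mod_data}: the right-hand side is a directional derivative of $a_i a_j\,\mu^p_{i,j}(\cdot;a_i,a_j)$, and $\mu^p_{i,j}$ is $\mu_C$ convolved with a parallelogram indicator (per \ref{sec:parallelogram}), so it is supported in the Minkowski sum $C + P$, which is bounded; a derivative does not enlarge support.

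For finiteness everywhere I would argue in the Fourier domain using \eqref{eq:brt_sinc}. The potentially singular factor in the unfiltered transform \eqref{eq:brt_ft} is $\dfrac{w\cdot(\theta_i+\theta_j)}{(w\cdot\theta_i)(w\cdot\theta_j)}$, blowing up as $w\cdot\theta_i\to 0$ or $w\cdot\theta_j\to 0$; but in \eqref{eq:brt_sinc} these denominators have been absorbed into $a_i a_j\,\sinc(a_i w\cdot\theta_i)\,\sinc(a_j w\cdot\theta_j)$, and the $\sinc$ function is entire, bounded by $1$. Thus $\doublehat{g}^m_{i,j}(w;a_i,a_j)$ equals $-i2\pi\,(w\cdot(\theta_i+\theta_j))\,a_i a_j\,\doublehat{\mu}_C(w)\,\sinc(a_i w\cdot\theta_i)\,\sinc(a_j w\cdot\theta_j)$, a product of bounded-on-compacts factors — and since $\mu_C\in L^1$, $\doublehat{\mu}_C$ is bounded and continuous — so $\doublehat{g}^m_{i,j}$ is finite (indeed continuous) for every $w$; the delta functions of \eqref{eq:brt_ft} have genuinely been cancelled, as already noted around \eqref{eq:brt_filt_forward}. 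To get pointwise finiteness of the spatial signal $g^m_{i,j}(x;a_i,a_j)$ itself, I would note it is a finite linear combination of four translates of $g_{i,j}$, and at any fixed $x$ each translate is a value of a sum of two CBTs (or, off $C$, of Radon transforms) of an absolutely integrable compactly supported $\mu_C$, hence finite; so the combination is finite at every $x$.

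The main obstacle is the bounded-support argument: one must be careful that the second-difference PSF, whose four point masses are offset in \emph{both} $\theta_i$ and $\theta_j$, actually kills the data on all of the unbounded tails, including the overlap region $C_i^-\cap C_j^-$ where $g_{i,j}$ is a \emph{sum} of a $\theta_i$-constant term and a $\theta_j$-constant term rather than constant along a single direction. The point is that the first-difference-in-$\theta_i$ factor annihilates the $\theta_i$-constant summand while the first-difference-in-$\theta_j$ factor annihilates the $\theta_j$-constant summand, and the tensor structure of \eqref{eq:brt_mod_symmetric} means \emph{both} first differences are applied; so each summand is killed. Making the ``sufficiently far from $C$'' bookkeeping precise — i.e. exhibiting the explicit bounded set outside which all four shifted evaluation points lie in a single constancy region — is the one place requiring genuine care, and this is exactly where leaning on the clean identity \eqref{eq:brt_mod_data} (plus the support of the parallelogram-smoothed image from \ref{sec:parallelogram}) is the cleanest route.
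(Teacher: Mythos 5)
Your proposal is correct and, in its recommended form, takes essentially the same route as the paper: bounded support via the identity \eqref{eq:brt_mod_data} together with the parallelogram-convolution support bound from \ref{sec:parallelogram}, and pointwise finiteness by expanding $g^{m}_{i,j}$ as a combination of four translates of $g_{i,j}$. The supplementary elementary annihilation argument on the tails and the Fourier-domain boundedness observation are consistent but not needed.
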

\begin{proof}
	Without loss of generality, we assume the support of the image $\mu_C(x)$ is bounded by the circumscribed parallelogram, $P$, according to Definition \ref{def:p}.	We first observe $\mu_{i,j}^p\left(x;a_i,a_j\right)$ has bounded support. We define
	\numparts
	\begin{eqnarray}
	f(x) &\coloneqq \mu_{i,j}^p\left(x;a_i,a_j\right)a_i a_j\left|\det\left(\theta_i,\theta_j\right)\right|\\
	&= \mu_C(x)*p_{i,j}\left(x;a_i,a_j\right)\label{eq:muconvp}.
	\end{eqnarray}
	\endnumparts
	The indicator function $p_{i,j}\left(x;a_i,a_j\right)$, defined by \eqref{eq:p}, has bounded support over a parallelogram similar to $P$ in \eqref{eq:p:v}. Taking the convolution of two functions defined over similar parallelograms, the support of the result is also bounded by a similar parallelogram. This limits support of $f(x)$ to a parallelogram with sides parallel to $\theta_i$ and $\theta_j$ with perpendicular distances $v_i+b_i$ and $v_j+b_j$, respectively. The variables $b_i$ and $b_j$ are related to $a_j$ and $a_i$ according to \eqref{eq:a2vi} and \eqref{eq:a2vj}, respectively.

	Using \eqref{eq:muconvp} in \eqref{eq:brt_mod_data}, we have
	\begin{equation}
	|g^{m}_{i,j}(x;a_i,a_j) | = \frac{1}{|\det\left(\theta_i,\theta_j\right)|} \left|\frac{d}{d \left(\theta_i+\theta_j\right)}f(x) \right|
	\end{equation}
	Outside the the region of support of $f(x)$, its directional derivative is also zero. Therefore $g^{m}_{i,j}(x;a_i,a_j)$ has bounded support.
	
	To show $g^{m}_{i,j}(x;a_i,a_j)$ is finite everywhere, we consider
	\numparts 
	\begin{eqnarray}
	\fl\left|g^{m}_{i,j}(x;a_i,a_j)\right| &= \left|g_{i,j}(x) * m_{i,j}(x;a_i,a_j)\right|\\
	\fl&= \left|g_{i,j}\left(x+\frac{a_i}{2}\theta_i +\frac{a_j}{2}\theta_j \right) - g_{i,j}\left(x-\frac{a_i}{2}\theta_i +\frac{a_j}{2}\theta_j \right)\right. \nonumber\\
	\fl&\phantom{=|}\,\left.- g_{i,j}\left(x+\frac{a_i}{2}\theta_i -\frac{a_j}{2}\theta_j \right) + g_{i,j}\left(x-\frac{a_i}{2}\theta_i -\frac{a_j}{2}\theta_j \right)\right|\\
	\fl&\leq 4 \|g_{i,j}(x)\|_\infty.
	\end{eqnarray}
	\endnumparts
	This is finite due to the assumption $\mu_C(x)$ is integrable.
\end{proof}

\subsection{Image Reconstruction from BRT Data with Bounded Support}\label{sec:reconstruct}
Bounded BRT data facilitates numeric inversion in the frequency domain. We consider two inversion strategies. The size of the available spreading parameters $a_i$ and $a_j$ in \eqref{eq:brt_mod_symmetric} plays an important role in selecting an inversion strategy. In both cases we reconstruct a version of the desired image subject to convolution. However, the PSFs associated with the reconstructed images are different.

Multiplying both sides of \eqref{eq:brt_filt_forward} by the inverse of \eqref{eq:h} we have the relationship
\begin{equation}
\doublehat{\mu}^m_{i,j}\left(w;a_i,a_j\right) = \doublehat{g}^{m}_{i,j}(w;a_i,a_j)\frac{-i2\pi \left(w\cdot\theta_i\right)\left(w\cdot\theta_j\right)}{ w\cdot\left(\theta_i+\theta_j\right)},\quad\forall\,w\notin \Theta_{i,j}.\label{eq:mumft}
\end{equation}
This is similar to \eqref{eq:brt_inv_fourier}. However, the reconstruction is subject to multiplication with the PSF \eqref{eq:brt_mod_symmetric_ft}. Analytically, we can recover $\doublehat{\mu}^m_{i,j}\left(w;a_i,a_j\right)$ from $\doublehat{g}^{m}_{i,j}(w;a_i,a_j)$ using \eqref{eq:mumft} and continuity assumptions or, equivalently, boundary conditions on $\mu^m_C(x)$. 

We find a representation of the left hand side of \eqref{eq:mumft} in the spatial domain by taking the inverse two-dimensional Fourier transform of \eqref{eq:img_modulated_fourier_1}
\begin{eqnarray}
\fl\mu^m_{i,j}(x;a_i,a_j) &= \mu_C\left(x+\frac{a_i}{2}\theta_i +\frac{a_j}{2}\theta_j \right) - \mu_C\left(x-\frac{a_i}{2}\theta_i +\frac{a_j}{2}\theta_j \right) \nonumber\\
\fl&\phantom{=}\,-\mu_C\left(x+\frac{a_i}{2}\theta_i -\frac{a_j}{2}\theta_j \right) + \mu_C\left(x-\frac{a_i}{2}\theta_i -\frac{a_j}{2}\theta_j \right).\label{eq:mum}
\end{eqnarray}
For small $a_i$ and $a_j$, the image copies will overlap. As $a_i$ and $a_j$ increase we can reconstruct $\mu_C\left(x\right)$ from segments without overlap. 
\begin{theorem}
	An image with with bounded support, $\mu_C(x)$, can be recovered from filtered BRT data $g^m_{i,j}(x; a_i, a_j)$ when ${a_i>v_j/\left|2\det\left(\theta_i,\theta_j\right)\right|}$ and ${a_j>v_i/\left|2\det\left(\theta_i,\theta_j\right)\right|}$ for $v_i$, $v_j$ defined according to \eqref{eq:v}.
\end{theorem}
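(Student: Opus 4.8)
The plan is to reduce the statement to a tiling fact for a parallelogram. By shift invariance we may take $C$ centred as in Definition~\ref{def:centered}, and, exactly as in the proof of the preceding theorem, assume $\mu_C$ is supported in its circumscribed parallelogram $P$ of Definition~\ref{def:p}, whose edges parallel to $\theta_i,\theta_j$ have lengths $\alpha_i=v_j/\left|\det(\theta_i,\theta_j)\right|$ and $\alpha_j=v_i/\left|\det(\theta_i,\theta_j)\right|$ by \eqref{eq:alphai}--\eqref{eq:alphaj}; in these terms the hypotheses are precisely $a_i>\alpha_i/2$ and $a_j>\alpha_j/2$. Since \eqref{eq:mumft}, together with continuity of $\doublehat{\mu}^m_{i,j}$ across $\Theta_{i,j}$, already recovers $\mu^m_{i,j}(\cdot\,;a_i,a_j)$ from $g^m_{i,j}(\cdot\,;a_i,a_j)$, it suffices to recover $\mu_C$ from $\mu^m_{i,j}$.

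First I would read off from \eqref{eq:mum} that $\mu^m_{i,j}(\cdot\,;a_i,a_j)$ is a sum of four translated copies of $\mu_C$: for each sign pair $\varepsilon=(\varepsilon_i,\varepsilon_j)\in\{+1,-1\}^2$ there is one copy, carrying the sign $\varepsilon_i\varepsilon_j$, supported on the translate $P_\varepsilon\coloneqq P+\frac{\varepsilon_i a_i}{2}\theta_i+\frac{\varepsilon_j a_j}{2}\theta_j$ and equal there to $\varepsilon_i\varepsilon_j\,\mu_C\!\left(x-\frac{\varepsilon_i a_i}{2}\theta_i-\frac{\varepsilon_j a_j}{2}\theta_j\right)$. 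Next, for each $\varepsilon$ I would isolate the subset $R_\varepsilon\subset P_\varepsilon$ on which the other three translates all vanish; on $R_\varepsilon$ we have $\mu^m_{i,j}(x)=\varepsilon_i\varepsilon_j\,\mu_C\!\left(x-\frac{\varepsilon_i a_i}{2}\theta_i-\frac{\varepsilon_j a_j}{2}\theta_j\right)$, so $\mu_C$ is determined (up to the known sign) at every point of the translate of $R_\varepsilon$ by $-\frac{\varepsilon_i a_i}{2}\theta_i-\frac{\varepsilon_j a_j}{2}\theta_j$.

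The identification of $R_\varepsilon$ is bookkeeping in the oblique coordinates $(s_i,s_j)$ dual to $(\theta_i,\theta_j)$: each $P_\varepsilon$ is a parallelogram of half-widths $\alpha_i/2,\alpha_j/2$ centred at $(\varepsilon_i a_i/2,\varepsilon_j a_j/2)$, and removing the other three (each of which shares one or both of its slabs with $P_\varepsilon$) leaves precisely the corner of $P_\varepsilon$ lying farthest in the direction $\varepsilon_i\theta_i+\varepsilon_j\theta_j$. Pulling back by $-\frac{\varepsilon_i a_i}{2}\theta_i-\frac{\varepsilon_j a_j}{2}\theta_j$ and using the coordinatization $P=\{u_i\theta_i+u_j\theta_j:|u_i|\le\alpha_i/2,\ |u_j|\le\alpha_j/2\}$ from \eqref{eq:p:alpha}, these four sets become the four corners of $P$: the $\varepsilon$-corner is $\left\{u:\ \left|u_i-\varepsilon_i(\alpha_i-a_i)/2\right|<a_i/2,\ \left|u_j-\varepsilon_j(\alpha_j-a_j)/2\right|<a_j/2\right\}\cap P$, i.e. the $u_i$-slab $(\alpha_i/2-a_i,\ \alpha_i/2]$ when $\varepsilon_i=+1$ and $[-\alpha_i/2,\ a_i-\alpha_i/2)$ when $\varepsilon_i=-1$, and likewise in $u_j$.

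Finally I would check that these four corners cover $P\supseteq C$. In the $u_i$-direction the two slabs $[-\alpha_i/2,\,a_i-\alpha_i/2)$ and $(\alpha_i/2-a_i,\,\alpha_i/2]$ together exhaust $[-\alpha_i/2,\alpha_i/2]$ exactly when $\alpha_i/2-a_i<a_i-\alpha_i/2$, i.e. $a_i>\alpha_i/2$; symmetrically one needs $a_j>\alpha_j/2$ in the $u_j$-direction, which is why the stated inequalities are strict. Under the hypotheses both hold, so any $u=(u_i,u_j)\in P$ lies in some $u_i$-slab and some $u_j$-slab, and the corresponding sign pair $\varepsilon$ indexes a copy whose clean corner $R_\varepsilon$ contains $u$ (all four sign pairs are represented), whence $\mu_C(u)$ is read off from $\mu^m_{i,j}$ with sign $\varepsilon_i\varepsilon_j$. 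Substituting $\alpha_i=v_j/|\det(\theta_i,\theta_j)|$ and $\alpha_j=v_i/|\det(\theta_i,\theta_j)|$ returns the bounds as stated. The step I expect to be the real work is the oblique-coordinate description of $R_\varepsilon$ and the verification that the four corners tile $P$ at precisely these half-width thresholds; the reduction to $P$, the four-copy decomposition, and the covering count are routine.
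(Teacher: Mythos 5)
Your proposal is correct and follows essentially the same route as the paper: reduce to the circumscribed parallelogram $P$, write $\mu^m_{i,j}$ as four signed translates of $\mu_C$, and recover $\mu_C$ pointwise from the one translate that is "clean" there, with the thresholds $a_i>\alpha_i/2$, $a_j>\alpha_j/2$ arising exactly as you describe. The paper simply fixes a specific covering of $P$ by its four closed quadrants in the $(s_i,s_j)$ coordinates of \eqref{eq:p:alpha} (each quadrant lying inside your corresponding corner region $R_\varepsilon$) and states the resulting four-case reconstruction formula explicitly, whereas you work with the maximal regions and verify the covering condition; the content is the same.
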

\begin{proof}
A portion of the image $\mu_C(x)$, without overlap, is associated with each shifted copy in \eqref{eq:mum}. When the shifts are sufficiently large, the partial images can be combined to reconstruct the original image. To demonstrate this it is useful to first extend $C$ to the circumscribed parallelogram $P$ in \eqref{eq:p:alpha}. The edge lengths of this geometric region bound the minimum shift lenghts for image recovery.

To emphasize $P$ as the assumed region of support, we use $\mu_P(x)$. Since $C\subseteq P\subset\mathbb{R}^2$, we have $\mu_P(x)=\mu_C(x)$, for all $x\in\mathbb{R}^2$. For $a_i>\alpha_i/2$ and $a_j>\alpha_j/2$ we can recover $\mu_P(x)$ from $\mu^m_{i,j}(x)$ using
\begin{equation}
\fl \mu_P\left(s_i\theta_i + s_j\theta_j\right) = \cases{
\mu^m_{i,j}\left(\left(s_i-\frac{a_i}{2}\right)\theta_i + \left(s_j-\frac{a_j}{2}\right)\theta_j\right), & $s_i,s_j\leq 0$\\
-\mu^m_{i,j}\left(\left(s_i-\frac{a_i}{2}\right)\theta_i + \left(s_j+\frac{a_j}{2}\right)\theta_j\right), & $s_i\leq 0, s_j \geq 0$\\
-\mu^m_{i,j}\left(\left(s_i+\frac{a_i}{2}\right)\theta_i + \left(s_j-\frac{a_j}{2}\right)\theta_j\right), & $s_i\geq 0, s_j\leq 0$\\
\mu^m_{i,j}\left(\left(s_i+\frac{a_i}{2}\right)\theta_i + \left(s_j+\frac{a_j}{2}\right)\theta_j\right), & $s_i,s_j\geq 0$.}
\end{equation}
Each case can be expanded as a series of four terms using \eqref{eq:mum}. However, three of these terms are zero due to the support of $P$ in \eqref{eq:p:alpha}. Combining the four cases we recover $\mu_P(x)$, and therefore $\mu_C(x)$, for all $x\in\mathbb{R}^2$. Expanding $a_i>\alpha_i/2$ and $a_j>\alpha_j/2$ using \eqref{eq:alphai}, \eqref{eq:alphaj} we obtain the boundary in the stated form.
\end{proof}

In general this approach requires large data sets to obtain $\doublehat{g}^{m}_{i,j}(w;a_i,a_j)$. For many cases we can extend BRT data using the techniques in Section \ref{sec:datasupport}. However, when we are limited to small $a_i$ and $a_j$, another approach is necessary. 

Alternatively, we can simply recover $ \doublehat{\mu}_{i,j}^p\left(w;a_i,a_j\right)$. From \eqref{eq:brt_p}, we have
\begin{equation}
 \doublehat{\mu}_{i,j}^p\left(w;a_i,a_j\right)= \frac{-\doublehat{g}^{m}_{i,j}(w;a_i,a_j)}{i2\pi a_i a_j w\cdot\left(\theta_i+\theta_j\right)},\quad\forall\,w\notin \Theta_{i,j}.\label{eq:mumpft}
\end{equation}
Taking the inverse two-dimensional Fourier transform of \eqref{eq:mumpft} we have
\begin{eqnarray}
\mu_{i,j}^p\left(x;a_i,a_j\right) = \frac{-1}{\alpha_i\alpha_j\|\theta_i+\theta_j\|}\int_{-\infty}^0 g^m_{i,j}\left(x - t\frac{\theta_i+\theta_j}{\|\theta_i+\theta_j \| }\right)dt.\label{eq:ambartinverse}
\end{eqnarray}
This is equivalent to the inversion formula of Ambartsoumian and Jebelli \cite{Ambartsoumian2019}. For sampled data this formula can be implemented easily whenever the direction of integration is aligned with a sampling axis. For other cases, the frequency domain representation \eqref{eq:mumpft} is useful. 

We emphasize $\mu_{i,j}^p\left(x;a_i,a_j\right)\neq \mu_{C}\left(x\right)$. Taking the inverse two-dimensional Fourier transform of \eqref{eq:img_modulated_fourier_p}, we have
\begin{equation}
 \mu_{i,j}^p\left(x;a_i,a_j\right) = \frac{\mu(x) * p_{i,j}(x;a_i,a_j)}{a_i a_j\left|\det\left(\theta_i,\theta_j\right)\right|}.
\end{equation}
This demonstrates the recovered image as a blurring of the original image with a parallelogram window function. For high resolution, noise-free, data the size of this window can be made arbitrarily small. The recovery \eqref{eq:ambartinverse} is only approaches $\mu_C(x)$ in a limiting sense \cite{Ambartsoumian2019}.

\section{Numeric Algorithms}\label{sec:algorithms}
In application we must reconstruct images from sampled data. Our analysis of the BRT from a linear systems perspective extends easily to sampled data. We demonstrate this with two new algorithms. First we provide an algorithm for extending CBT data. This is motivated by the work in Section \ref{sec:datasupport}. For a broad class of problems this can be applied to BRT data and therefore facilitates use of the filtering methods of Section \ref{sec:modulation}. For brevity, we do not address numeric implementation of the filtering methods. However, \ref{sec:noninteger_shift} is useful in this context. Second, we present a numeric inversion algorithm for bounded BRT data. Leveraging the rotational invariance of the two-dimensional Fourier transform, the directions $\theta_i, \theta_j$ are unconstrained in our algorithm. We also include regularization to address poor conditioning of the forward operator.

\subsection{Extending Cone Beam Transform Data}\label{sec:alg_cbt_mod}
We consider CBT data sampled uniformly over a rectangular region. For consistency with previous definitions, we expand $x$ along two scalar axes $x = (t,y)$. For the two axes we use subscripts to distinguish the number of samples $N_t$, $N_y$ and the sample spacing $\Delta_t$, $\Delta_y$. We collect the available data in the $N_y\times N_x$ matrix $B$. The elements represent samples of the CBT data
\begin{equation}
\fl [B]_{n,m} = \left(B \mu_C\right)((t_B + (m-1)\Delta_t, y_B + (n-1)\Delta_y), (\cos\xi,\sin\xi))
\end{equation}
for $n\in\{1,\ldots,N_y \}$, and $m\in\{1,\ldots,N_t\}$. We expand the direction $\theta=(\cos\xi,\sin\xi)$. The spatial location associated with sample $B_{1,1}$ is $x_B = (t_B, y_B)$. In this configuration the $y$ coordinate increases with the row index $n$, and the $t$ coordinate increases with the column index $m$. It is not necessary to distinguish the terms $\Delta_t$, $\Delta_y$, and $\xi$ for most of the computations related to sampled CBT data. For convenience we define
\begin{equation}
\lambda \coloneqq \frac{\Delta_t}{\Delta_y}\tan\xi, \label{eq:alpha}
\end{equation}
which is a sufficient input for algorithms on uniformly sampled data.

Extending CBT data in the direction $\theta$ is trivial. For this we need only consider ${x\in V^-_\theta\cup C^+_\theta \cup V^+_\theta}$,  where $(B \mu_C)(x,\theta)=0$ according to Theorem \ref{th:cbt_0}. Zero padding is sufficient.

Extending the data in the direction $-\theta$ is nontrivial. For simplicity we first consider only $\xi\in(0,\pi/2)$. Figure \ref{fig:cbt_extend_diagram} illustrates the problem of extending the data, $B$, into the quadrants $Q2$, $Q3$, and $Q4$. The Radon transform serves as a proxy for extending the data according to Corollary \ref{th:cbt_extend}. We assume the first row and column comprise no samples interior to $C$ such that these data are samples of ${(R\mu_C)(x\cdot\theta^\perp,\theta)}$. We can then extend the data using $(B \mu_C)(x,\theta)=(R \mu_C)(x\cdot\theta^\perp,\theta)$. A brute-force approach would be to resample the Radon transform for each new data point. A computationally efficient approach is to extend the CBT data by shifting samples along the boundaries. This process is detailed in Algorithm \ref{alg:cbt_extend}. For $\xi\notin(0,\pi/2)$, we can still use Algorithm \ref{alg:cbt_extend} by suitably flipping the inputs and outputs.

\begin{figure}
	\centering{
		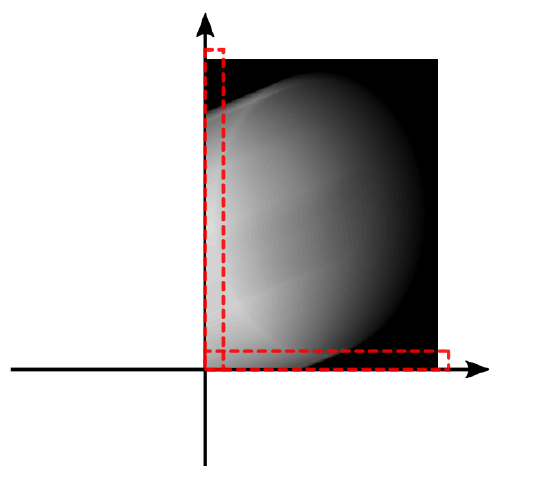
		\caption{To extend the sampled CBT data, $B$, we use only the first row and first column as indicated by the dashed boxes. We first extend the data in the direction $-t$ to synthesize data in the second quadrant (indicated by $Q_2$). We then extend the combined data in the direction $-y$. This synthesizes data in the third and fourth quadrants (indicated by $Q_3$ and $Q_4$). The process is detailed in Algorithm \ref{alg:cbt_extend}.}
		\label{fig:cbt_extend_diagram}
	}
\end{figure}

\begin{algorithm}
	\caption{\textsc{CbtExtend}: Extend CBT data from a rectangular, uniformly sampled region. We assume the direction of integration is positive. Interpreting the available data as occupying the first quadrant, we extend the data into quadrants 2-4 as depicted in Figure \ref{fig:cbt_extend_diagram}. The inputs ${\bf b}_y$ and ${\bf b}_t$ represent the first column and row of the data matrix $B$, respectively. Input $\lambda$ is given by \eqref{eq:alpha}. The inputs $M_t$ and $M_y$ indicate the number of requested samples in the direction $-t$ and $-y$, respectively. The input $p$ indicates desired padding when using Algorithm \ref{alg:nonintshift} presented in \ref{sec:noninteger_shift}. We use $\vertcat$ and $\flipud$ to vertically concatenate and vertically flip matrices, respectively.}
	\label{alg:cbt_extend}
	\begin{algorithmic}[1]
		\Require ${\bf b}_y\in \mathbb{R}^{N_y}$, ${\bf b}_t\in \mathbb{R}^{N_t}$, $ \lambda\in \mathbb{R}^+$, $M_t, M_y, p\in \mathbb{Z}^+$
		\Ensure $Q_2\in\mathbb{R}^{N_y\times M_t}$, $Q_3\in\mathbb{R}^{M_y\times M_t}$, $Q_4\in\mathbb{R}^{M_y\times N_t}$
		\State ${\bf x}_R = {\bf b}_y$ \Comment{Expand $B$ in direction $-t$}
		\State ${\bf x}_L = \flipud({\bf b}_t(2:1+p))$
		\State ${\bf s} =  \lambda\left[\matrix{-M_t & -M_t+1 & \cdots & -1}\right]^T$
		\State $p_W= p + M_y$
		\State $W = \textsc{NonIntShift}({\bf x}_R,\, {\bf s},\, p_W,\, {\bf x}_L)$\Comment{Algorithm \ref{alg:nonintshift}}
		\State $Q_2 = W(1:N_y,:)$
		
		\State ${\bf x}_R = \vertcat(Q_2(1,:)^T,{\bf b}_t)$ 	\Comment{Expand $[Q_2\, B]$ in direction $-y$ }
		\State ${\bf x}_L = \flipud(Q_2(2:1+p,1))$
		\State ${\bf s} =  \lambda^{-1}\left[\matrix{-M_y& -M_y+1 & \cdots & -1}\right]^T$
		\State $p_W = p + \lceil  \lambda^{-1} M_y \rceil$
		\State $W = \textsc{NonIntShift}({\bf x}_R,\, {\bf s},\, p_W,\, {\bf x}_L)$
		\State $Q_3 = W(1:M_t,:)^T$
		\State $Q_4 = W(M_t+1:N_t+M_t,:)^T$
	\end{algorithmic}
\end{algorithm}

We can adapt this process to an important class of BRT problems. We consider the incident direction aligned with the $t$-axis and $\theta_i\cdot\theta_j<0$. Specifically, we use $\theta_i = (-1,0)$ and expand $\theta_j=(\cos\xi,\sin\xi)$. We construct the BRT data matrix $G\in\mathbb{R}^{N_y\times N_t}$ with elements
\begin{equation}
\fl [G]_{n,m} = \left(G \mu_C\right)((t_G + (m-1)\Delta_t, y_G + (n-1)\Delta_y), (-1,0), (\cos\xi,\sin\xi)).\label{eq:g:sampled}
\end{equation}
The previous definitions for $\Delta_t$, $\Delta_y$, and $\lambda$ remain applicable.

Extending BRT data requires knowledge of both $(R \mu_C)(v,\theta_i)$ and $(R \mu_C)(v,\theta_j)$. We assume BRT data are sampled beyond the support of the image, such that no boundary samples of $G$ correspond to points within $C$. For $\theta_i = (-1,0)$, and ${|\xi|<\pi/2}$ this implies ${[G]_{1,N_t} = [G]_{N_y,N_t} = 0}$. In this case $(R \mu_C)(v,\theta_i)$ can be recovered from the last column of $G$. We can extend BRT data in the direction $-\theta_i$ simply by repeating the last column. For $\xi>0$, the last row (maximum $y$) of $G$ is then zero. The function $(R \mu_C)(v,\theta_j)$ can be recovered from the first column of $G$ and the first row. The BRT data can be extended in the direction $-\theta_j$ using Algorithm \ref{alg:cbt_extend}. Alternatively, for $\xi<0$,  $(R \mu_C)(v,\theta_j)$ can be recovered from the first column of $G$ and the last row. The BRT data can still be extended in the direction $-\theta_j$ using Algorithm \ref{alg:cbt_extend}. However, the inputs and outputs must be flipped accordingly.

\subsection{Inversion of BRT Data with Regularization}\label{sec:brt_inv}
Filtering ensures bounded support of $g^m_{i,j}(x;a_i,a_j)$. However, recovery of $\doublehat{\mu}^m_{i,j}(w;a_i,a_j)$ is still ill-posed due to conditioning of $\doublehat{h}_{i,j}(w)$. For this we use Tikhonov regularization which can be applied sample-wise in the frequency domain.

We restate \eqref{eq:h} as an expression of scalar values by expanding $w = (w_t,w_y)$, $\theta_i = (\cos\xi_i, \sin\xi_i)$, and $\theta_j = (\cos\xi_j, \sin\xi_j)$ 
\begin{equation}
\fl \doublehat{h}_{i,j}((w_t,w_y)) = \frac{-w_t\left(\cos\xi_i+\cos\xi_j\right)-w_y\left(\sin\xi_i+\sin\xi_j\right)}{i2\pi \left(w_t\cos\xi_i+w_y\sin\xi_i\right)\left(w_t\cos\xi_j+w_y\sin\xi_j\right)}.\label{eq:doublehathscalar}
\end{equation}
Notice this expression is commutative with respect to $\xi_i$ and $\xi_j$. We define the system matrix $\doublehat{H}$ by sampling \eqref{eq:doublehathscalar} uniformly
\begin{equation}
[\doublehat{H}]_{n,m} = \doublehat{h}_{i,j}\left(\frac{m}{N_t\Delta_t},\frac{n}{N_y\Delta_y}\right).\label{eq:Hsampled}
\end{equation}
The discrete analog of \eqref{eq:brt_filt_forward} is then
\begin{equation}
\doublehat{G}^m = \doublehat{\Psi}^m \odot \doublehat{H}\label{eq:discfwd}.
\end{equation}
Here we have used $\doublehat{G}^m$ to represent the two-dimensional discrete Fourier transform of $G^m$, the filtered analog of \eqref{eq:g:sampled}. We use  $\doublehat{\Psi}^m$ to represent samples of $\doublehat{\mu}^m_{i,j}(w)$. The symbol $\odot$ represents element-wise multiplication. 

Zeros in the denominator of \eqref{eq:doublehathscalar} are problematic for numeric analysis. We define the auxiliary function 
\begin{equation}
d(w_t,w_y) = \left(w_t\cos\xi_i+w_y\sin\xi_i\right)\left(w_t\cos\xi_j+w_y\sin\xi_j\right)\label{eq:Hdenominator}.
\end{equation}
However, filtering ensures $\doublehat{G}^m$ and $\doublehat{\Psi}^m$ are also zero when ${d(w_t,w_y)=0}$. Zeros in the numerator of \eqref{eq:doublehathscalar} also affect conditioning of the problem. Tikhonov regularization provides a generic mitigation strategy. Putting this together, we approximate the element-wise inverse of $\doublehat{H}$
\begin{equation}
[K]_{n,m} \coloneqq \cases{\frac{[\doublehat{H}^*]_{n,m}}{|[\doublehat{H}]_{n,m}|^2 + \epsilon}& $d(\frac{m}{N_t\Delta_t},\frac{n}{N_y\Delta_y})\neq0$\\
	0 & otherwise,}\label{eq:Htik}
\end{equation}
where $^*$ indicates complex conjugation, and $\epsilon$ is the smoothing parameter. This yields the estimate
\begin{equation}
\doublehat{\Psi}^m\approx\doublehat{G}^m\odot K.\label{eq:modreconstruct}
\end{equation}
Applying the 2D inverse discrete Fourier transform to the result we obtain a reconstruction of the filtered attenuation image. This process is described in Algorithm \ref{alg:brt_inv_mod_data}. The smoothing parameter $\epsilon$, in \eqref{eq:Htik}, can be adjusted for measurement noise and numerical errors. 
 
\begin{algorithm}
\caption{\textsc{BrtInvertFiltered}: Invert BRT data with bounded support. In this algorithm $\textsc{ComputeK}$ refers to the computation of $K$ using equations \eqref{eq:doublehathscalar}, \eqref{eq:Hsampled}, \eqref{eq:Hdenominator}, and \eqref{eq:Htik}. Here we use $\DFT^2$ and $\DFT^{-2}$ to represent the 2D discrete Fourier transform and its inverse, respectively.}
\label{alg:brt_inv_mod_data}
\begin{algorithmic}[1]
	\Require $G\in \mathbb{R}^{N_y\times N_t}$; $\Delta_t,\Delta_y,\epsilon\in \mathbb{R}^+$; $\xi_i,\xi_j\in(-\pi/2,\pi/2)$
	\Ensure $\Psi\in\mathbb{R}^{N_y\times N_t}$
	\State $K \leftarrow \textsc{ComputeK}(N_t,N_y,\Delta_t,\Delta_y,\xi_i,\xi_j,\epsilon)$
	\State $\doublehat{G} = \DFT^{2}\left\{G\right\}$
	\State $\doublehat{\Psi} =  \doublehat{G}\odot K$
	\State $\Psi = \DFT^{-2}\left\{\doublehat{\Psi}\right\}$
\end{algorithmic}
\end{algorithm}

Tikhonov regularization is generic and does not impose boundary conditions. For arbitrary angles $\xi_i$, and $\xi_j$, few samples of $\doublehat{\Psi}^m$ lie in the nullspace of the forward operator and it is sufficient to zero the results at these samples. Otherwise, it may be necessary to impose boundary conditions. For example, to ensure $\Psi^m$ is zero along the $t=0$ and $y=0$ boundary, all columns and all rows of $\doublehat{\Psi}^m$ must sum to 0.

\section{Numerical Simulations}\label{sec:results}
We provide results of numerical simulations to demonstrate the utility of this analysis. We use the modified Shepp-Logan phantom \cite{Toft1996,Shepp1974} in most of our simulations as depicted in Figure \ref{fig:brtdata}. This phantom is reasonably challenging and the BRT data can be determined analytically. For Figure \ref{fig:brtdata} we sample the image and data space uniformly in $y$ and $t$. For $y$ we use $N_y=600$ sampling over $[-1,1]$. For $t$ we use $N_t=400$ sampling over $[-0.75,0.75]$. This effects different sampling rates in $t$ and $y$. Limiting the extent of available BRT data in this way truncates the data both in $y$ and $t$ as shown in Figure \ref{fig:sheplogan_brt_data1} and Figure \ref{fig:sheplogan_brt_data2}

\begin{figure}
	\centering
	\setlength{\tabcolsep}{2pt}
	\begin{tabular}{ >{\centering\arraybackslash}m{1.4in} >{\centering\arraybackslash}m{1.4in} >{\centering\arraybackslash}m{1.4in} m{0.35in}}
		\subfloat[Reference image]{\includegraphics{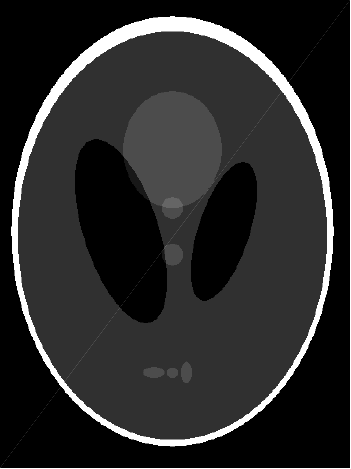}\label{fig:sheplogan_ref}}&
		\subfloat[BRT data, $\xi = \pi/11$]{\includegraphics{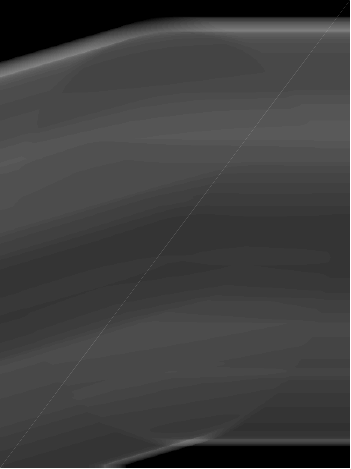}\label{fig:sheplogan_brt_data1}}&
		\subfloat[BRT data, $\xi = -\pi/5$]{\includegraphics{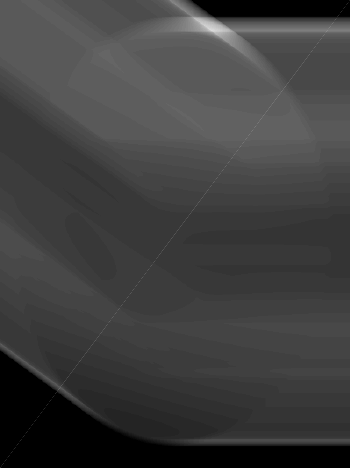}\label{fig:sheplogan_brt_data2}}&
		\raisebox{-1em}{\includegraphics{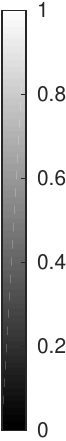}}
	\end{tabular}
	\caption{Reference image and associated BRT data. Figure \ref{fig:sheplogan_ref} depicts Shepp-Logan phantom as a reference image. Figure \ref{fig:sheplogan_brt_data1} and \ref{fig:sheplogan_brt_data1} depict BRT data with different scatter angles. The BRT data were determined analytically and sampled at the scatter points associated with the pixel centers of Figure \ref{fig:sheplogan_ref}.}\label{fig:brtdata}
\end{figure}

We first demonstrate filtering bounds support of the data. Results are shown for both the BRT and SBRT in Figure \ref{fig:brt_mod}. In this case the filtered image and filtered data were all obtained analytically and then sampled. 

\begin{figure}%
	\centering
	\setlength{\tabcolsep}{2pt}
	\begin{tabular}{@{} >{\centering\arraybackslash}m{2.11in} >{\centering\arraybackslash}m{2.11in}  m{0.35in}}
		\subfloat[Image I]{\includegraphics{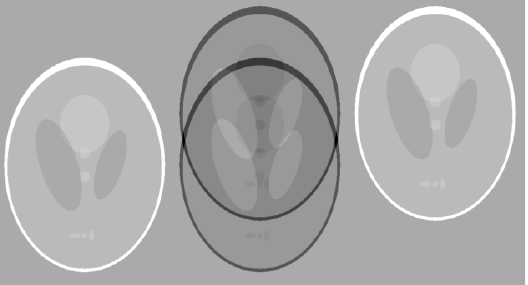}\label{fig:sheplogan_mod_ref}}&
		\subfloat[BRT data for Image I]{\includegraphics{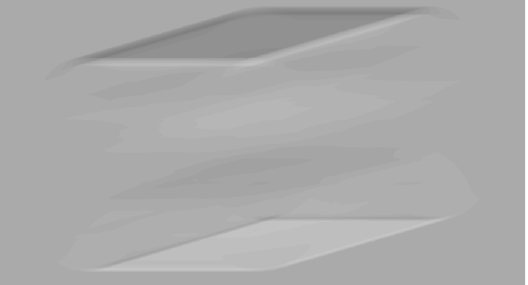}\label{fig:sheplogan_mod_brt}}&
		\multirow{2}{*}[1em]{\includegraphics{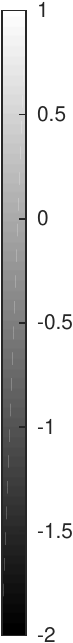}}\\
		
		\subfloat[Image II]{\includegraphics{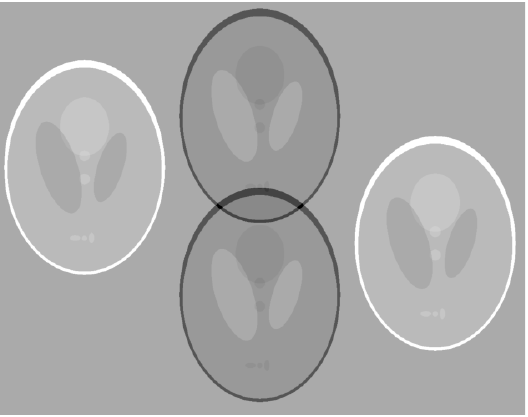}\label{fig:sheplogan_mod2_ref}}&
		\subfloat[SBRT data for Image II]{\includegraphics{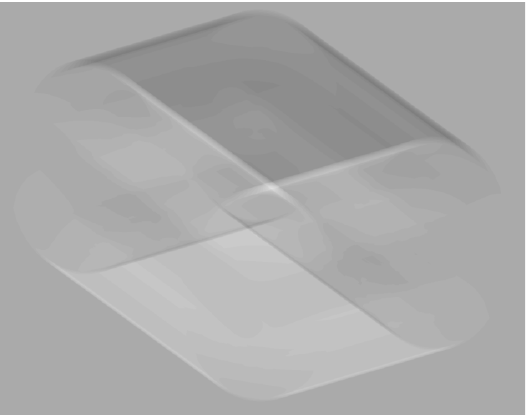}\label{fig:sheplogan_mod_sbrt}}
	\end{tabular}
	
	\caption{Image filtering effects BRT data with bounded support. Figure \ref{fig:sheplogan_mod_ref} depicts a notional phantom defined by filtering the image of Figure \ref{fig:sheplogan_ref} analytically using \eqref{eq:brt_mod_symmetric}. In this case $\xi_i=\pi$ and $\xi_j=\pi/11$ where the subscripts distinguish the directions ${\theta_i =(\cos\xi_i,\sin\xi_i)}$. The associated analytic BRT data are shown in Figure \ref{fig:sheplogan_mod_brt} and indicate bounded support. To bound support of SBRT data, filtering need only address the unique scatter directions associated with the two BRT data sets. Figure \ref{fig:sheplogan_mod2_ref} and Figure \ref{fig:sheplogan_mod_sbrt} show the filtered image and filtered SBRT data, respectively. Here the scatter angles for the BRT data composing the SBRT data are ${\xi_j \in \{\pi/11,-\pi/5\}}$.
	}\label{fig:brt_mod}
\end{figure}

Filtering can also be applied to sampled BRT data directly. For sampled data this effects small errors which we quantify against the reference data of Figure \ref{fig:brt_mod}. Results are shown in Figure \ref{fig:brt_mod_err}. Artifacts are observed at scatter points for which resulting rays are tangent to large transitions in the image. This is a consequence of sampling. For both BRT and SBRT filtering the peak absolute error is less than 5\% the peak image value. 

\begin{figure}%
	\centering
	\subfloat[Absolute BRT filtering error]{\includegraphics{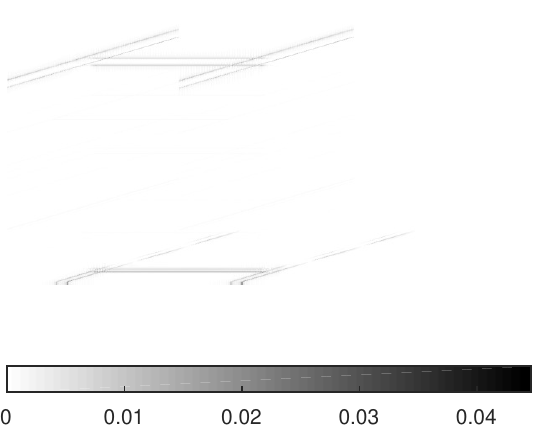}\label{fig:sheplogan_mod_err}}\quad
	\subfloat[Absolute SBRT filtering error]{\includegraphics{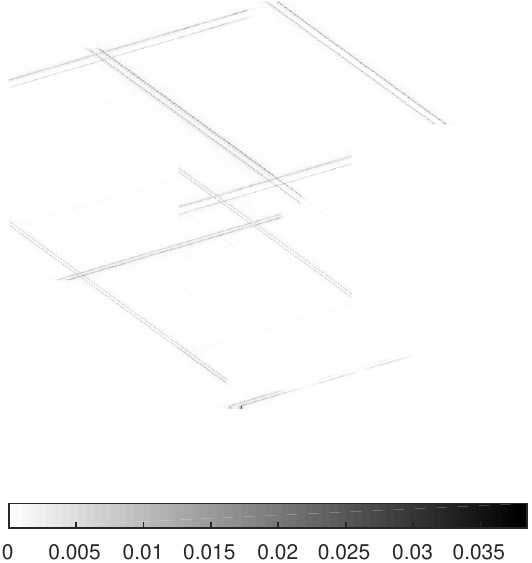}\label{fig:sheplogan_mod_sbrt_err}}
	
	\caption{Error extending and filtering sampled, truncated, BRT data. Figure \ref{fig:sheplogan_mod_err} depicts the error extending and filtering the data of Figure \ref{fig:sheplogan_brt_data1}. The reference data are shown in Figure \ref{fig:sheplogan_mod_brt}. Similarly, Figure \ref{fig:sheplogan_mod_sbrt_err} depicts the error extending and filtering the data of Figure \ref{fig:sheplogan_brt_data1} and Figure \ref{fig:sheplogan_brt_data2}. In this case the reference data are shown in Figure \ref{fig:sheplogan_mod_sbrt}.
	}\label{fig:brt_mod_err}
\end{figure}

Further analysis of $h(w)$ provides insights on BRT inversion. We can express \eqref{eq:h} in polar coordinates with the change of variables
\begin{equation}
w = \rho \left(\cos\phi,\sin\phi\right) \quad  \theta_i = \left(\cos\xi_i,\sin\xi_i\right) \quad  \theta_j = \left(\cos\xi_j,\sin\xi_j\right)
\end{equation}
such that
\begin{equation}
\doublehat{h}((\rho,\phi)) = \frac{-\cos\left(\phi - \frac{1}{2}\left(\xi_i+\xi_j\right)\right)\cos\left( \frac{1}{2}\left(\xi_i-\xi_j\right)\right)}{i\pi\rho \cos\left(\phi - \xi_i\right)\cos\left(\phi - \xi_j\right)}.\label{eq:doublehathpolar}
\end{equation}
We make a few observations. First, $\rho$ in the denominator of \eqref{eq:doublehathpolar} implies the BRT attenuates high frequency content. Reconstruction will be sensitive to noise at high frequencies. Second, there are singularities at ${\phi = \xi_i \pm \pi/2}$ and ${\phi = \xi_j \pm \pi/2}$. Filtering ensures the image and data are zero at these frequencies. Finally, \eqref{eq:doublehathpolar} is zero at ${\phi = \frac{1}{2}\left(\xi_i+\xi_j\right) \pm \pi/2}$. These zeros do not appear in the CBT, but arise in the combination of two CBTs. 

The matrix $K$ plays a critical role in BRT reconstruction \eqref{eq:modreconstruct}. This incorporates changes to $h(w)$ due to $\xi_j$, and the regularization term $\epsilon$. Changes to $|K|$ with respect to these terms is shown in Figure \ref{fig:Panalysis}. Here we fix ${\xi_i = \pi}$  without loss of generality. The lines indicating strong attenuation are due to singularities of \eqref{eq:doublehathpolar} at ${\phi = \pm \pi/2}$, and ${\phi = \xi_j \pm \pi/2}$. Zeros in \eqref{eq:doublehathpolar} effect large amplitudes in $|K|$ along ${\phi = \xi_j/2}$. However, this amplitude is curtailed through regularization as $\epsilon$ increases. Without regularization, we would expect reconstruction artifacts along this spectral line.

\begin{figure}%
	\centering
	\setlength{\tabcolsep}{2pt}
	\begin{tabular}{>{\centering\arraybackslash}m{1em} >{\centering\arraybackslash}m{1.36in} >{\centering\arraybackslash}m{1.36in} >{\centering\arraybackslash}m{1.36in} m{0.35in}}
		& $\xi_j= \pi/20$ & $\xi_j = \pi/7$ & $\xi_j = \pi/4$ &\\
		
		\rotatebox{90}{$\epsilon=1\mathrm{e}^{-6}$}&
		\includegraphics{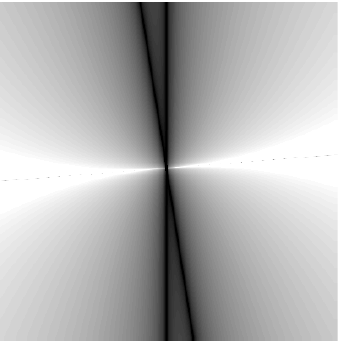}&
		\includegraphics{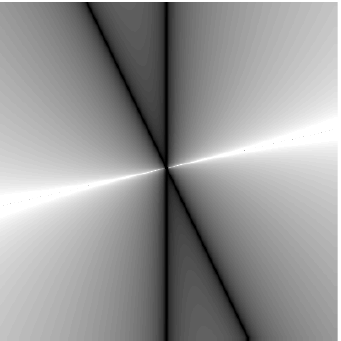}&
		\includegraphics{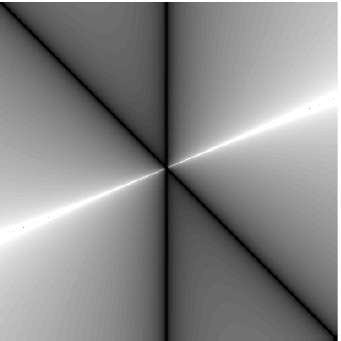}&
		\multirow{3}{*}[1.5em]{\includegraphics{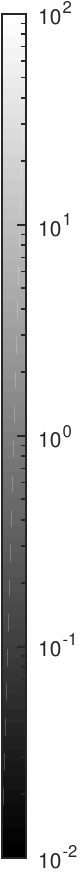}}\\	
		
		\rotatebox{90}{$\epsilon=1\mathrm{e}^{-5}$} &
		\includegraphics{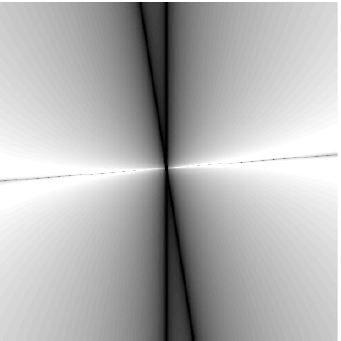}&
		\includegraphics{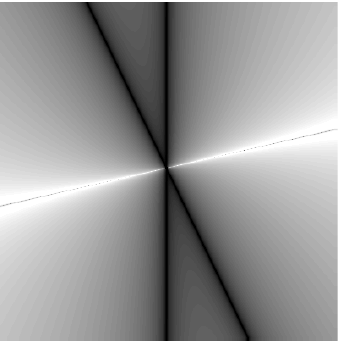}&
		\includegraphics{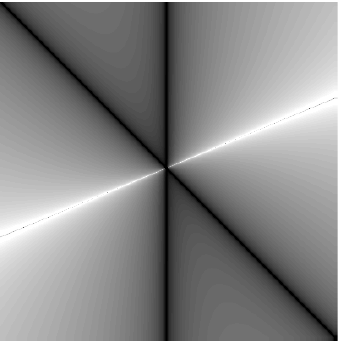}\\	
		
		\rotatebox{90}{$\epsilon=1\mathrm{e}^{-4}$} &
		\includegraphics{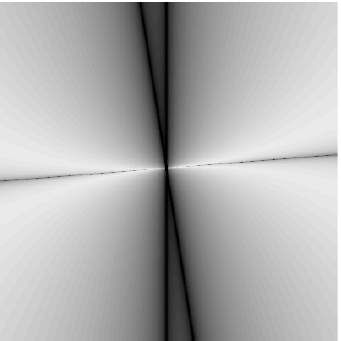}&
		\includegraphics{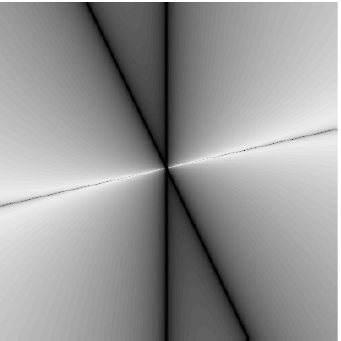}&
		\includegraphics{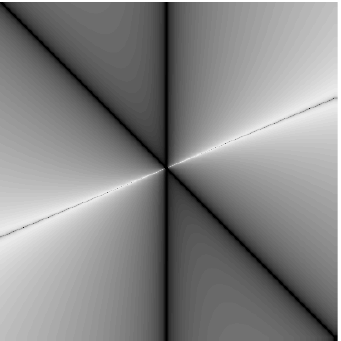}
	\end{tabular}
	
	\caption{Changes in $|K|$  \eqref{eq:Htik} with respect to regularization $\epsilon$ and angle $\xi_j$. The first column of images corresponds to scatter direction $\xi_j=\pi/20$, the second column $\xi = \pi/7$, and the third column $\xi = \pi/4$. For all images we fix ${\xi_i = \pi}$. Each row of images uses a different $\epsilon$; the first row uses $1\mathrm{e}^{-6}$, the second $1\mathrm{e}^{-5}$, and the third $1\mathrm{e}^{-4}$. For all images the zero-frequency content is centered for both axes. Further, the same display scale is used as shown in the colorbar.
	}\label{fig:Panalysis}
\end{figure}

The original global BRT inversion formula is due to Florescu et al. \cite{Florescu2011}. We will refer to this as the FMS formula. Specifically contrasting with our algorithm, we analyze the same square phantom in Figure \ref{fig:FlorescuContrast}. The original work assumed data available over an infinite strip with no additional insights on limiting the data. The data of Figure \ref{fig:flor_brt_data} violates this assumption. Directly applying the FMS formula to this data yields poor results as shown in Figure \ref{fig:flor_inv_truncated}. However, we can simulate additional data using Algorithm \ref{alg:cbt_extend}. Applying the FMS formula to the extended BRT data yields results consistent with those previously published \cite{Florescu2011}. In this way, Algorithm \ref{alg:cbt_extend} can be used as a preprocessing step to reduce the extent of sampling required for reconstruction using the FMS formula. The direction of the artifacts is explained by the nullspace of the forward operator \eqref{eq:h}. Striations are observed in the direction ${\xi/2+\pi/2}$. Regularization further improves reconstruction as demonstrated in Figure \ref{fig:flor_inv_regularized}.

\begin{figure}%
	\centering
	\setlength{\tabcolsep}{2pt}
	\begin{tabular}{ >{\centering\arraybackslash}m{1.4in} >{\centering\arraybackslash}m{1.4in} >{\centering\arraybackslash}m{1.4in} m{0.35in}}
\subfloat[Reference image]{\includegraphics[]{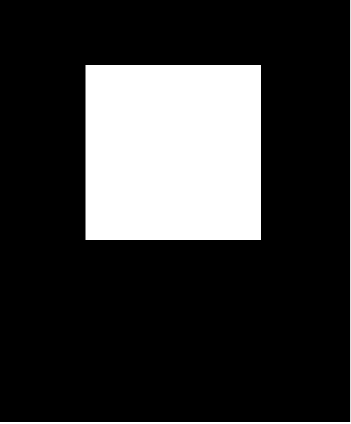}\label{fig:flor_ref}}&
\subfloat[BRT data]{\includegraphics[]{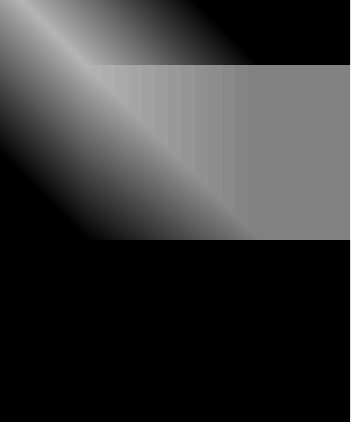}\label{fig:flor_brt_data}}&
\subfloat[Results, FMS formula]{\includegraphics[]{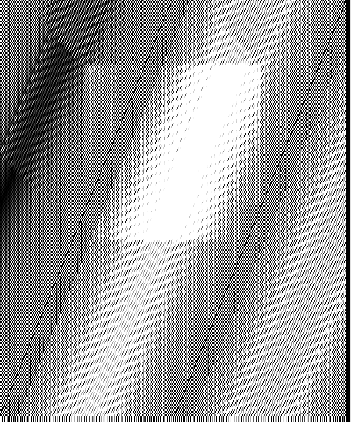}\label{fig:flor_inv_truncated}}&
		\multirow{2}{*}[1.5em]{\includegraphics{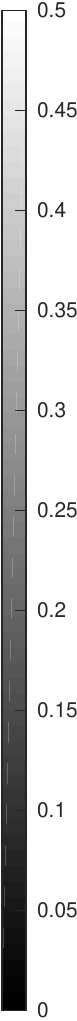}}\\	
		
\subfloat[Extended BRT data]{\includegraphics[]{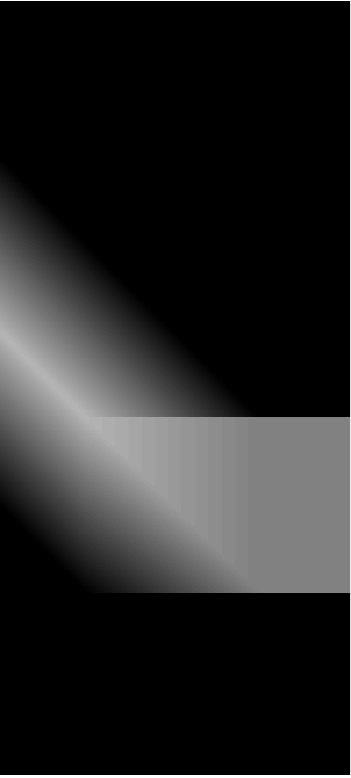}\label{fig:flor_brt_extend}}&
\subfloat[Results, FMS formula]{\includegraphics[]{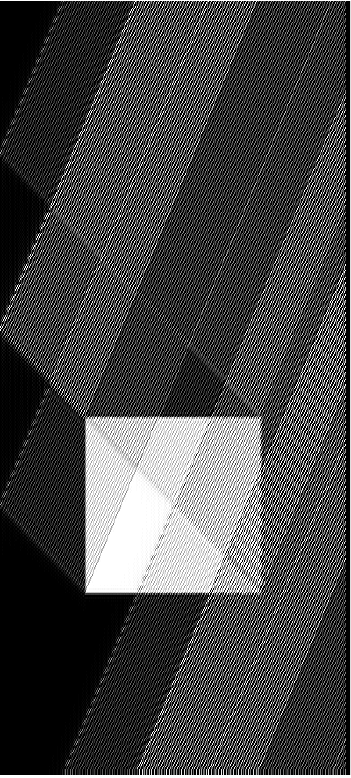}\label{fig:flor_inv_extended}}&
\subfloat[Results, Algorithm \ref{alg:brt_inv_mod_data}]{\includegraphics[]{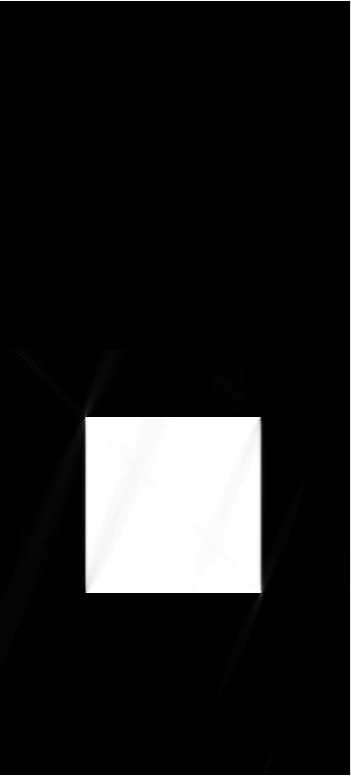}\label{fig:flor_inv_regularized}}
	\end{tabular}
		\caption{Noise-free reconstruction from limited data. The reference image is shown in Figure \ref{fig:flor_ref}, and we limit the available BRT data as shown in Figure \ref{fig:flor_brt_data} with $\xi_j=-\pi/4$. FMS \cite{Florescu2011} reconstruction, using limited data, is shown in Figure \ref{fig:flor_inv_truncated}. The limited BRT data of Figure \ref{fig:flor_brt_data} can be extended using Algorithm \ref{alg:cbt_extend} as shown in Figure \ref{fig:flor_brt_extend}. Figure \ref{fig:flor_inv_extended} depicts results applying the FMS formula to the extended data of Figure \ref{fig:flor_brt_extend}. Similarly, Figure \ref{fig:flor_inv_regularized} depicts results applying Algorithm \ref{alg:brt_inv_mod_data} to the extended data of Figure \ref{fig:flor_brt_extend}. All images use the same display scale shown in the colorbar.
	}\label{fig:FlorescuContrast}
\end{figure}

Inversion results for the Shepp-Logan phantom on noisy data are shown in Figure \ref{fig:BrtInvChanges}. The BRT data were obtained analytically, sampled, and corrupted with additive Gaussian noise. For small $\epsilon$, we see artifacts where the direction $\xi$ is tangent to high frequency edges of the image. This is a consequence of sampling errors and extending the BRT data. Additionally, edges perpendicular to the direction $\xi/2$ are not well resolved. This effects blurring along the direction $\xi/2$. Increasing $\epsilon$ increases the angular extent of blurring. The effect is reduced as $\xi$ increases.

\begin{figure}%
	\centering
	\setlength{\tabcolsep}{2pt}
	\begin{tabular}{>{\centering\arraybackslash}m{1em} >{\centering\arraybackslash}m{1.36in} >{\centering\arraybackslash}m{1.36in} >{\centering\arraybackslash}m{1.36in} m{0.35in}}
		& $\xi_j = \pi/20$ & $\xi_j = \pi/7$ & $\xi_j = \pi/4$ &\\
		
		\rotatebox{90}{$\epsilon=1\mathrm{e}^{-6}$}&
		\includegraphics{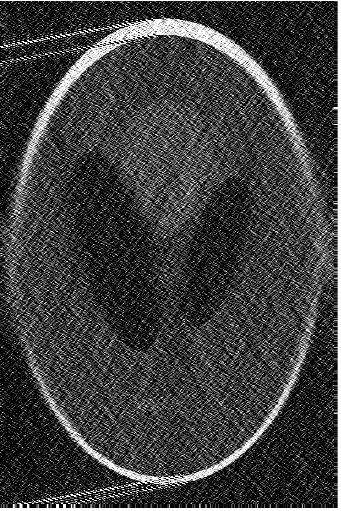}&
		\includegraphics{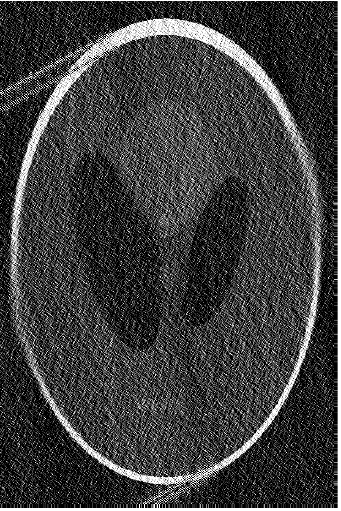}&
		\includegraphics{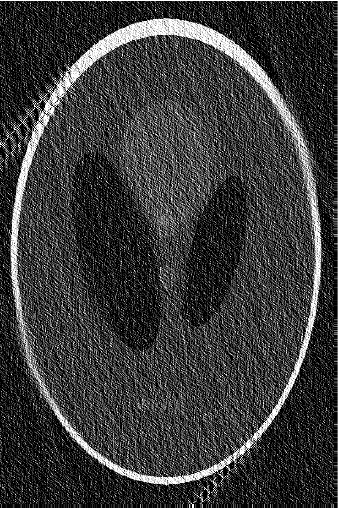}&
		\multirow{3}{*}[1.5em]{\includegraphics{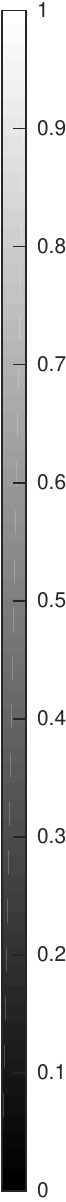}}\\	
		
		\rotatebox{90}{$\epsilon=1\mathrm{e}^{-5}$} &
		\includegraphics{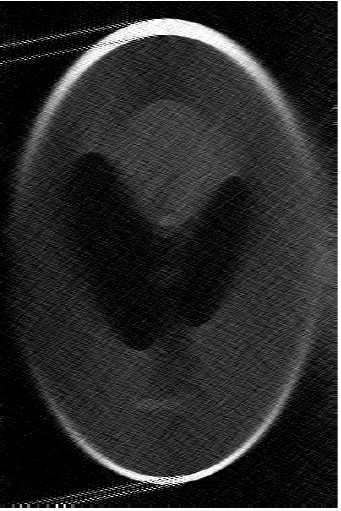}&
		\includegraphics{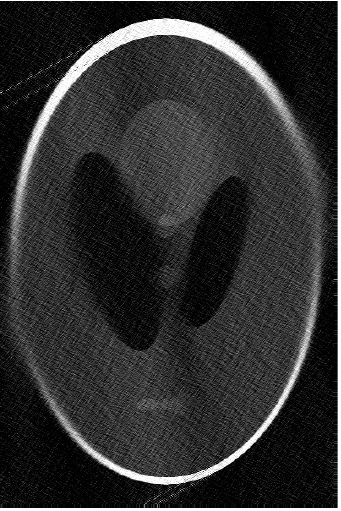}&
		\includegraphics{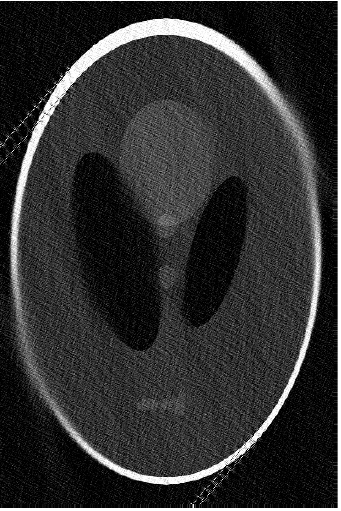}\\	
		
		\rotatebox{90}{$\epsilon=1\mathrm{e}^{-4}$} &
		\includegraphics{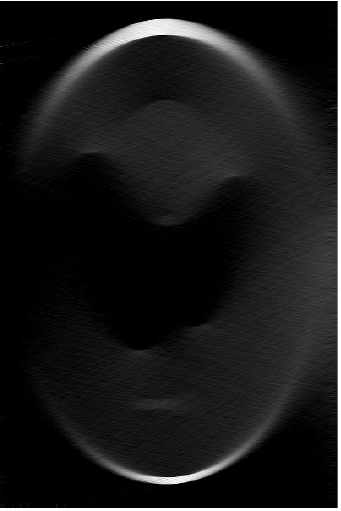}&
		\includegraphics{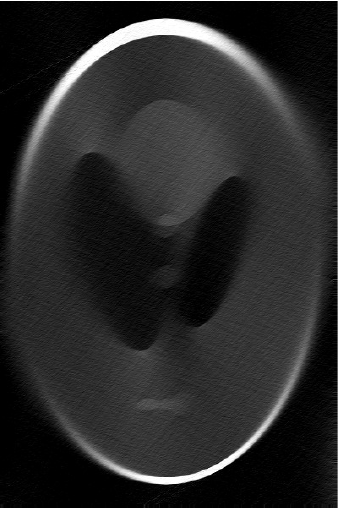}&
		\includegraphics{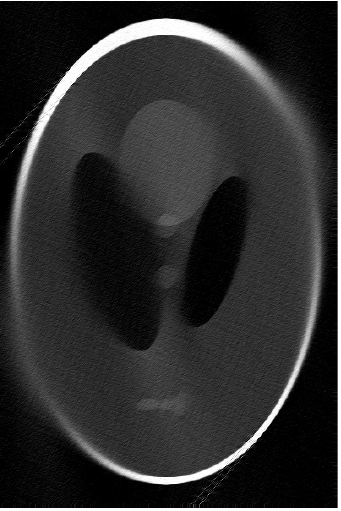}
	\end{tabular}
	\caption{Reconstruction of noisy, truncated, BRT data using Algorithm \ref{alg:brt_inv_mod_data}. The first column of images corresponds to scatter direction $\xi_j=\pi/20$, the second column $\xi_j = \pi/7$, and the third column $\xi_j = \pi/4$. Each row of images uses a different $\epsilon$ in \eqref{eq:Htik} which appears in the reconstruction formula \eqref{eq:modreconstruct}; the first row uses $1\mathrm{e}^{-6}$, the second $1\mathrm{e}^{-5}$, and the third $1\mathrm{e}^{-4}$. All images use the same display scale shown in the colorbar. The same realization of Gaussian noise was added to each data set. The standard deviation was $10^{-3}$ times the peak amplitude of the image.
	}\label{fig:BrtInvChanges}
\end{figure}

\section{Discussion}\label{sec:discussion}
In conclusion we have demonstrated a new inversion algorithm for the BRT with improved performance with one scatter angle. Casting the BRT as a linear combination of CBTs provides insight on the minimum extent of sampling required for reconstruction and techniques to bound support of the data. Indexing the data by the scatter location, the BRT is an LSI operator. Analyzing the BRT as a linear operator, in the Fourier domain, yields a concise representation of its nullspace and highlights numerical sensitivities. This motivates the use of regularization which improves reconstruction from sampled data. Improving inversion of the BRT using one scatter angle supports extension to coherent scatter x-ray imaging problem which is angularly selective due to the momentum transfer.

\ack
\begin{itemize}
	\item J. A. O'Sullivan was supported by NIH R01 CA 212638.
	\item This paper has been improved substantially by the reviewers' comments.
\end{itemize}
\appendix

\section{Derivation of the Fourier Transform of the CBT}\label{sec:cbt_ft}
We define the Fourier transform as a function of frequency to avoid scaling the inverse. For a one-dimensional function we define the one-dimensional Fourier transform and its inverse
\begin{eqnarray}
\mathcal{F}^1\{f(x)\} &:=\int_{-\infty}^{+\infty} f(x)e^{-i2\pi w x}dx\\
\mathcal{F}^{-1}\{\hat{f}(w)\} &:=\int_{-\infty}^{+\infty} \hat{f}(w)e^{i2\pi w x}dw
\end{eqnarray}
For two-dimensional functions we define the two-dimensional Fourier transform and its inverse
\begin{eqnarray}
\mathcal{F}^2\{f(x)\} &:= \int_{\mathbb{R}^2} f(x)e^{-i2\pi w\cdot x}d^2x\\
\mathcal{F}^{-2}\{\doublehat{f}(w)\} &:=\int_{\mathbb{R}^2} \doublehat{f}(w)e^{i2\pi w\cdot x}d^2w.
\end{eqnarray}
In this form, we have
\begin{eqnarray}
\mathcal{F}^1\{u(x)\}&=\frac{1}{i2\pi w}+\frac{1}{2}\delta(w)\label{eq:ft:u}
\end{eqnarray}
where $\delta(x)$ and $u(x)$  represent the Dirac delta function and the unit step function, respectively.  

For CBT data associated with a fixed direction, $\theta$, we define the two-dimensional Fourier transform
\begin{eqnarray}
\doublehat{b}_\theta(w) &= \mathcal{F}^2\{(B \mu_C)(x,\theta)\}\\
&= \int_{\mathbb{R}^2} \int_{0}^\infty \mu_C (x+t\theta) dt\, e^{-i2\pi w\cdot x}d^2x\\
&= \int_{0}^\infty \int_{\mathbb{R}^2} \mu_C (y) e^{-i2\pi w\cdot y} d^2y\, e^{i2\pi t w\cdot \theta}dt\label{eq:cbt_ft:2}\\
&= \doublehat{\mu}_C(w)\int_0^\infty e^{i2\pi t w\cdot \theta}dt\label{eq:cbt_ft:3}\\
&= \doublehat{\mu}_C(w)\left[\frac{-1}{i2\pi w\cdot\theta}+\frac{1}{2}\delta(w\cdot\theta)\right].\label{eq:cbt_ft:rslt}
\end{eqnarray}
In \eqref{eq:cbt_ft:2} we changed the order of integration and substituted ${y=x+t\theta}$. In \eqref{eq:cbt_ft:3} we substituted ${\doublehat{\mu}_C(w)=\mathcal{F}^2\{\mu_C(x)\}}$. Finally, in \eqref{eq:cbt_ft:rslt} we made use of \eqref{eq:ft:u}.

\section{BRT Inversion by Fourier Analysis}\label{sec:brt_inverse_fourier}
To invert the BRT, we start by multiplying both sides of \eqref{eq:brt_ft} with the reciprocal of \eqref{eq:h}. Rearranging terms, we have
\begin{eqnarray}
\fl\doublehat{\mu}_C(w) =&i\pi\doublehat{\mu}_C(w)\frac{\left(w\cdot\theta_i\right)\left(w\cdot\theta_j\right)}{ w\cdot\left(\theta_i+\theta_j\right)}\delta\left(w\cdot\theta_i\right) +i\pi\doublehat{\mu}_C(w)\frac{\left(w\cdot\theta_i\right)\left(w\cdot\theta_j\right)}{ w\cdot\left(\theta_i+\theta_j\right)}\delta\left(w\cdot\theta_j\right)\nonumber\\
\fl&+\doublehat{g}_{i,j}(w)\frac{-i2\pi \left(w\cdot\theta_i\right)\left(w\cdot\theta_j\right)}{w\cdot\left(\theta_i+\theta_j\right)},\quad \forall\, w\notin \Theta_{i,j}.\label{eq:brt_inv_system}
\end{eqnarray}
The first two terms on the right hand side vanish under integration.  We note the inverse two-dimensional Fourier transform
\begin{equation}
\int_{\mathbb{R}^2}\left(w\cdot\theta\right)\delta\left(w\cdot\theta\right)e^{i2\pi w\cdot x}d^2w = 0,\quad \forall\, x\in\mathbb{R}^2
\end{equation} 
Incorporating multiplicative functions does not change this result as long as they are finite for all $w\cdot\theta=0$. For the first term in \eqref{eq:brt_inv_system}, we expand ${w = s\theta_i + t\theta_i^\perp}$ as an orthonormal basis and set $s=0$. This leads to
\begin{equation}
\doublehat{\mu}_C(w)\left.\frac{w\cdot\theta_j}{ w\cdot\left(\theta_i+\theta_j\right)}\right|_{w=t\theta_i^\perp}=\doublehat{\mu}_C(t\theta_i^\perp),\quad \forall\, \theta_i\neq\theta_j\label{eq:frac}
\end{equation}
which is finite for all $t$ by our assumptions on $\mu_C(x)$. Applying a similar process for the second term, we find
\begin{equation}
\mathcal{F}^{-2}\left\{i\pi\doublehat{\mu}_C(w)\frac{\left(w\cdot\theta_i\right)\left(w\cdot\theta_j\right)}{ w\cdot\left(\theta_i+\theta_j\right)}\left[\delta\left(w\cdot\theta_i\right)+\delta\left(w\cdot\theta_j\right)\right]\right\}=0.
\end{equation}
To address the third term in \eqref{eq:brt_inv_system} we make use of the derivative and integral properties of the Fourier transform. Now we expand $x = s\theta + t\theta^\perp$ and consider the directional derivative
\begin{equation}
\left.\frac{d}{d\theta} f(x)\right|_{x=s\theta + t\theta^\perp} = \frac{d}{ds}f(s\theta + t\theta^\perp).
\end{equation}
Applying this to the inverse two-dimensional Fourier transform, we find
\begin{eqnarray}
\frac{d}{d\theta} f(x) &=\frac{d}{d\theta} \mathcal{F}^{-2}\left\{ \doublehat{f}(w) \right\}\\
&= \int_{\mathbb{R}^2}\frac{d}{ds}\doublehat{f}(w)e^{i2\pi \left(s w\cdot\theta + t w\cdot\theta^\perp\right)}d^2w\\
&= \mathcal{F}^{-2}\left\{ i2\pi \left(w\cdot\theta\right) \doublehat{f}(w)\right\}.
\end{eqnarray}
We previously derived the integration property of the two-dimensional Fourier transform in the context of the CBT. From \eqref{eq:cbt_ft:rslt}, we have
\begin{equation}
\mathcal{F}^{2}\left\{\int_{0}^{\infty}f\left(x + s\theta\right)ds\right\} =\doublehat{f}(w)\left[\frac{-1}{i2\pi w\cdot\theta}+\frac{1}{2}\delta(w\cdot\theta)\right].\label{eq:fourier_integral}
\end{equation}
When $\doublehat{f}(w\cdot\theta)=0$ for all $w\in\mathbb{R}^2$, substituting $-\theta$ in \eqref{eq:fourier_integral} we also find
\begin{eqnarray}
\int_{0}^{\infty}f\left(x + s\theta\right)ds &=-\int_{0}^{\infty}f\left(x + s(-\theta)\right)ds \\
&=-\int_{-\infty}^{0}f\left(x + s\theta\right)ds.
\end{eqnarray}
From \eqref{eq:brt_ft}, $\doublehat{g}_{i,j}(0)$ is not guaranteed to be finite, much less zero. However, $\left(w\cdot\theta_i\right)\left(w\cdot\theta_j\right)\doublehat{g}_{i,j}(w)=0$ for all $w\cdot\left(\theta_i+\theta_j\right)=0$. Putting this all together, we have equivalent expressions
\begin{eqnarray}
\mu_C(x)&= \frac{1}{\|\theta_i+\theta_j\|}\int_{0}^\infty \frac{d}{d\theta_i}\frac{d}{d\theta_j}g_{i,j}\left(x + s\frac{\theta_i+\theta_j}{\|\theta_i+\theta_j\|}\right)ds\\
&= \frac{-1}{\|\theta_i+\theta_j\|}\int_{-\infty}^0 \frac{d}{d\theta_i}\frac{d}{d\theta_j} g_{i,j}\left(x + s\frac{\theta_i+\theta_j}{\|\theta_i+\theta_j\|}\right)ds.
\end{eqnarray}
We emphasize equality only holds for images, $\mu_C(x)$, with bounded support. The assumption is necessary due to the nullspace of the forward operator.

\section{Two-Dimensional Fourier Transform of a Parallelogram}\label{sec:parallelogram}
Parallelograms are often expressed in terms of the edge directions and edge lengths. We consider the directions $\theta_i$, $\theta_j$ and associated edge lengths $a_i$, $a_j$, respectively. The total area of the resulting parallelogram is ${a_i a_j |\det\left(\theta_i,\theta_j\right)|}$. As an alternative to edge lengths, we also consider the orthogonal distance between parallel sides. We define $b_i$ and $b_j$ as the extent (height) of the parallelogram in the orthogonal directions $\theta_i^\perp$ and $\theta_j^\perp$, respectively. These distances are related to the edge lengths through the change of variables
\begin{eqnarray}
b_i &\coloneqq a_j  \left|\det\left(\theta_i,\theta_j\right)\right| \label{eq:a2vi}\\
b_j &\coloneqq a_i \left|\det\left(\theta_i,\theta_j\right)\right|\label{eq:a2vj}.
\end{eqnarray}
Additionally, we define the one-dimensional rectangular function
\begin{equation}
\Pi_T(t) \coloneqq \cases{ 1, & $|t|\leq T/2$\\0,& otherwise,}\label{eq:rectfunc}
\end{equation}
We define the two-dimensional parallelogram indicator function, centered at $x=0$,
\begin{equation}
p_{i,j}(x;a_i,a_j) \coloneqq \Pi_{b_i}\left(x\cdot\theta_i^\perp\right)\Pi_{b_j}\left(x\cdot\theta_j^\perp\right).\label{eq:p}
\end{equation}
Here $b_i$, $b_j$ are determined by $a_j$, $a_i$ using \eqref{eq:a2vi} and \eqref{eq:a2vj}, respectively. The area of this function is equivalently
\begin{equation}
\int_{R^2} p_{i,j}(x;a_i,a_j) d^2x =  a_i a_j |\det\left(\theta_i,\theta_j\right)| =  \frac{b_i b_j}{|\det\left(\theta_i,\theta_j\right)|}.\label{eq:p:area}
\end{equation}

To determine the two-dimensional Fourier transform of \eqref{eq:p}, we exploit the convolution property of the Fourier transform. We transform the two rectangular functions separately, then convolve the results in the frequency domain. The one-dimensional Fourier transform of \eqref{eq:rectfunc}
\begin{equation}
\mathcal{F}\left\{\Pi_T(t)\right\}=T\sinc\left(w T\right).
\end{equation}
Extending this to two dimension, we have the relation 
\begin{eqnarray}
\mathcal{F}^{2}\left\{\Pi_{T}(x\cdot\theta)\delta\left(x\cdot\theta^\perp\right) \right\}  
&=T\sinc(T w\cdot\theta)\label{eq:rectslice}\\
\mathcal{F}^{2}\left\{\Pi_{T}(x\cdot\theta)\right\}  
&=T\sinc(T w\cdot\theta)\delta\left(w\cdot\theta^\perp\right).
\end{eqnarray}
We derive the two-dimensional Fourier transform of \eqref{eq:p}
\begin{eqnarray}
\fl \doublehat{p}_{i,j}\left(w;a_i,a_j\right) &= b_i b_j \sinc(b_i w\cdot\theta_i^\perp)\delta\left(w\cdot\theta_i\right)*\sinc(b_j w\cdot\theta_j^\perp)\delta\left(w\cdot\theta_j\right)\\
\fl &= b_i b_j\int_{\mathbb{R}^2} \sinc(b_i y\cdot\theta_i^\perp)\delta\left(y\cdot\theta_i\right)\sinc(b_j \left(w-y\right)\cdot\theta_j^\perp)\nonumber\\
\fl &\phantom{= b_i b_j}\quad\times\delta\left(\left(w-y\right)\cdot\theta_j\right)d^2y\label{eq:pft:wide}\\
\fl &=  b_i b_j \int_{\mathbb{R}} \sinc(b_i t) \sinc(b_j \left(w-t\theta_i^\perp\right)\cdot\theta_j^\perp)\delta\left(\left(w-t\theta_i^\perp\right)\cdot\theta_j\right)dt\label{eq:pft:intt}\\
\fl &= \frac{b_i b_j}{|\det\left(\theta_i,\theta_j\right)|}\sinc\left(b_i\frac{w\cdot\theta_j}{\theta_i^\perp\cdot\theta_j}\right)\sinc\left(b_j w\cdot\left(\theta_j^\perp - \theta_j\frac{\theta_i\cdot\theta_j}{\theta_i^\perp\cdot\theta_j}\right)\right)\label{eq:pft:sincext}\\
\fl &=  \frac{b_i b_j}{|\det\left(\theta_i,\theta_j\right)|}\sinc\left(b_i\frac{w\cdot\theta_j}{\theta_i^\perp\cdot\theta_j}\right)\sinc\left(b_j\frac{w\cdot\theta_i}{\theta_i\cdot\theta_j^\perp}\right)\label{eq:pft:v}\\
\fl &= a_i a_j |\det\left(\theta_i,\theta_j\right)|\sinc\left(a_j w\cdot\theta_j\right)\sinc\left(a_i w\cdot\theta_i\right)\label{eq:pft:a}.
\end{eqnarray}
For \eqref{eq:pft:intt}, we expand the integration variable in \eqref{eq:pft:wide} using the orthonormal basis, ${y = s\theta_i + t\theta_i^\perp}$, and integrate over $s$. Changing the variable of integration again effects a change in scaling in \eqref{eq:pft:sincext}. The second $\sinc$ function of \eqref{eq:pft:sincext} comprises expansion of $\theta_i$ using the orthonormal basis $\theta_j$, $\theta_j^\perp$. Restoring $\theta_i$, we obtain \eqref{eq:pft:v}. Restoring $a_j$, $a_i$ using \eqref{eq:a2vi}, \eqref{eq:a2vj}, we obtain \eqref{eq:pft:a}.

\section{Non-Integer Shifts of Sampled Signals}\label{sec:noninteger_shift}
Non-integer shifts of sampled signals requires interpolation. Fast implementations of the discrete Fourier transform (DFT) can be leveraged to perform this task quickly. For continuous signal $x(t)$, and uniform sample spacing $\Delta$, we define the $N$-length sampled signal
\begin{equation}
x[n] \coloneqq x(\Delta n), \quad \forall\, n=\{0,\ldots,N-1\}.
\end{equation}
The Fourier coefficients are given using the DFT
\begin{equation}
y[m] = \sum_{n=0}^{N-1} x[n]\exp\left(-i2\pi n m /N\right)
\end{equation}
for $m\in\{0,\ldots, N-1\}$. The corresponding inverse DFT is
\begin{eqnarray}
x[n] &= \DFT^{-1}\left\{y[m]\right\}\\
&=\frac{1}{N}\sum_{m=0}^{N-1} y[m]\exp\left(i2\pi n m /N\right)\label{eq:idtft}
\end{eqnarray}
Using the generative property of the Fourier series representation\cite{Proakis2006}, we can approximate 
\begin{eqnarray}
x(\Delta n - \Delta s) &\approx  \frac{1}{N}\sum_{m=0}^{N-1} y[m]\exp\left(i2\pi (n-s) m /N\right)\\
&=\DFT^{-1}\left\{y[m]\exp\left(-i2\pi s m /N\right)\right\}.\label{eq:shifted}
\end{eqnarray}
Since $s$ is represented in samples, equation \eqref{eq:shifted} is independent of sampling rate. This is particularly efficient when multiple shifted copies of the same signal are required. In such cases $y[m]$ need only be computed once. Additional savings are realized computing the $\DFT^{-1}$ in \eqref{eq:shifted} for all signals at once. This process is described in Algorithm \ref{alg:nonintshift}. Here we have included additional inputs indicating zero padding, $p$, and fill samples ${\bf f}$ to reduce aliasing.

\begin{algorithm}
	\caption{\textsc{NonIntShift}: Non-integer shifting of a sampled signal. We use $\odot$ to represent element-wise multiplication with assumed expansion along singleton dimensions. When ${\bf x}$ is matrix-valued and ${\bf s}$ is scalar-valued, the shift will be applied to all columns of ${\bf x}$ independently. We use $0^{p}$ to represent the $p$-length column vector of all zeros.}
	\label{alg:nonintshift}
	\begin{algorithmic}[1]
		\Require ${\bf x}$, ${\bf s}$, $p$, ${\bf f}$
		\Ensure  $Z$
		\State $x = \vertcat({\bf x},\, 0^{p},\, {\bf f})$
		\State ${\bf n} = \left[\matrix{0 & 1 & \cdots & N-1}\right]^T$
		\State $y = \DFT(x)$
		\State $W = \exp\left(-i2\pi\left( {\bf n} \odot {\bf s}^T\right)/N\right)$
		\State $Z = \DFT^{-1} \{y \odot W\}$
	\end{algorithmic}
\end{algorithm}

\section*{References}

\bibliography{bibs}

\end{document}